\DeclareRobustCommand\scases[2]{\ifmmode #1\begin{cases}#2\end{cases}
	\else \[#1\begin{cases}#2\end{cases}\]
	\fi }
\DeclareRobustCommand\scases*[2]{\saveexpandmode\expandarg \StrSubstitute{\noexpand#2}{&}{\ }[\vv]\StrSubstitute{\vv}{\empty\\}{\ #1}[\vvv]\restoreexpandmode \ensuremath{#1\ \vvv}}
    \def\Jvar{}
    \def\Jfigs{}
    \newcommand{\J}[2][]{#2}
    \newcommand{\JO}[2][]{#2}
    \newcommand{\JFigure}[2][]{#2}
    \newenvironment{Jfigure}[1][]{\begin{figure}[#1]}{\end{figure}}
    \newenvironment{Jwrapfigure}[3][]{\begin{wrapfigure}[#1]{#2}{#3}}{\end{wrapfigure}}
    \def\Jvar{}
    \def\Jfigs{}
\newcommand{\J}[2][]{\appto\Jvar{#1#2}}
    \newcommand{\JO}[2][]{#1}
    \newcommand{\JFigure}[2][]{\appto\Jfigs{#1#2}}
\newcommand*{\block}[1]{\subparagraph{#1}}
\title{NP-Completeness for the Space-Optimality of Double-Array Tries}
\author{Hideo Bannai}{Tokyo Medical and Dental University, Japan}{hdbn.dsc@tmd.ac.jp}{https://orcid.org/0000-0002-6856-5185
 }{Supported by JSPS KAKENHI Grant Number JP20H04141}
 \author{Keisuke Goto}{LegalOn Technologies}{keisukegotou@gmail.com}{https://orcid.org/0000-0001-6964-6182}{}
 \author{Shunsuke Kanda}{LegalOn Technologies}{shnsk.knd@gmail.com}{https://orcid.org/0000-0002-5462-122X}{}
 \author{Dominik K\"oppl}{University of Yamanashi, Japan}{dkppl@yamanashi.ac.jp}{https://orcidid.org/0000-0002-8721-4444}{Supported by JSPS KAKENHI Grant Numbers JP21H05847 and JP21K17701.}
 \authorrunning{H. Bannai, K. Goto, S. Kanda, and D. K\"oppl } 
\keywords{double-array trie, NP-hardness, Hamiltonian path} 
\newtheorem{problem}{Problem}
\newcommand{\SODA}{\textsc{Soda}}
\newcommand{\node}[1]{n_{#1}}
\newcommand{\carr}{\mathit{check}}
\newcommand{\barr}{\mathit{base}}
\newcommand{\None}{\mathit{None}}
\begin{document}
\maketitle
\begin{abstract}
Indexing a set of strings for prefix search or membership queries is a fundamental task with many applications such as information retrieval or database systems.
A classic abstract data type for modelling such an index is a trie.
Due to the fundamental nature of this problem, it has sparked much interest, leading to a variety of trie implementations with different characteristics.
A trie implementation that has been well-used in practice is the double-array (trie) consisting of merely two integer arrays.
While a traversal takes constant time per node visit, the needed space consumption in computer words can be as large as the product of the number of nodes and the alphabet size.
Despite that several heuristics have been proposed on lowering the space requirements, we are unaware of any theoretical guarantees.

In this paper, we study the decision problem whether there exists a double-array of a given size.
To this end, we first draw a connection to the sparse matrix compression problem, which makes our problem NP-complete for alphabet sizes linear to the number of nodes.
We further propose a reduction from the restricted directed Hamiltonian path problem,
leading to NP-completeness even for logarithmic-sized alphabets.
\end{abstract}
\JO[\newpage]{}

\section{Introduction}
A trie \cite{fredkin1960trie} is an edge-labeled tree structure for storing a set of strings and retrieving them.
The \emph{double-array}~\cite{aoe1989efficient} is a trie representation that consists of just two arrays.
It has been empirically observed that double-arrays exhibit a good balance between operational time and storage.
Thanks to this virtue,
the usage of the double-array can be discovered in a wide range of applications such as string dictionaries~\cite{kanda2017compressed,yata2007compact}, tokenization~\cite{song2021fast,yoshinaga2023back}, pattern matching~\cite{kanda2023engineering}, language models~\cite{norimatsu2016fast} and text classification~\cite{yoshinaga2014self}.

The double-array represents a trie using two arrays called $\barr$ and $\carr$ of equal length.
Nodes are assigned to their elements, and edges are represented using their values.
The values are determined to satisfy the conditions of the double-array.
In other words, the values can be freely determined as long as the conditions are satisfied.
There is an infinite number of layouts of the double-array (i.e., $\barr$ and $\carr$) that represent some trie, and the size of the double-array (i.e., the length of $\barr$ and $\carr$) varies according to the layout.
Smaller layouts allow us to shorten the sizes of both arrays, and thus help us to obtain a memory-saving double-array representation.

In practice, we usually determine the values of $\barr$ and $\carr$ with a greedy strategy:
visiting nodes in the depth- or breadth-first order and searching for valid values from the front of the arrays.
This computes a layout that minimizes the size when adding a new node from the current state.
The greedy search is simple to implement and often works well for strings with a small alphabet such as DNA or ASCII\@.
This is because it is easy to search for values that satisfy the conditions.
However, it is locally optimal, and there is no guarantee that we obtain a double-array whose space is optimal.
It has been empirically observed that the size of the double-array grows  super-linearly with the alphabet size (for large enough alphabets),
such that memory efficiency has become an issue \cite{kanda2023engineering,liu2011compression,norimatsu2016fast};
these empirical evaluations sparked questions on whether there exist bounds on how good the greedy approach approximates the optimal space, or whether we can hope for another approach that gives optimal space guarantees.

Given that our trie stores $M$ nodes, where the edge labels are drawn from an alphabet of size $\sigma$,
the following trivial bounds are known:
We can reserve for each node $\sigma$ entries in $\barr$. If we rank all nodes from $1$ to $M$, then we store the ranks of the children of node $i$ in the range $\barr[1+ (i-1)\sigma  .. i \sigma-1]$ and set $\barr[i] = i$.
We further set $\carr[j] = i$ if $i$ has a child connected with an edge with label $j-i+1 \in [1..\sigma]$.
So both arrays have the length $M \sigma$,
where the fraction of empty space in both arrays grows with the alphabet size.
The best case we can hope for is to shrink the length down to $M$, which can be attained easily if the alphabet is unary.
However, the lower bound of $\Omega(M)$ does not seem to be achievable in the general case.
Apart from these two border cases, we are unaware of any studies on theoretical bounds for the space complexity of the double-array.
The lack of computational optimization makes it difficult to use double-arrays in a wider range of applications.
In fact, applications of double-arrays are still limited to natural language processing and information retrieval, although those of tries go beyond that scope.

On the computational side, an unpublished manuscript by Even et al.~\cite{even1997} shows that the problem is NP complete,
even when restricting the smallest size to $M+2$,
but the reduction seems to require an alphabet size of $\Theta(M)$\footnote{Although cited by many references, we were not able to obtain the manuscript.
    However, we will reconstruct a likely proof (cf.~\cref{thmSMC}\JO[~in the appendix]{}) following a note in the textbook~\cite[Chapter A4.2, Problem SR13]{DBLP:books/fm/GareyJ79} mentioning a reduction from Graph 3-Colorability.
}.
It thus remained an open question to us whether the problem is still NP-hard for a smaller alphabet size.
In this paper, we answer this question partially by proofing NP-hardness for alphabet sizes of $\Omega(\lg n)$,
shrinking the range to alphabet sizes between $3$ and $o(\lg n)$, for which the question is left open.

\block{Related Work}
There are several practical studies to compute smaller double-arrays for large alphabets.
Examples are Chinese strings \cite{liu2011compression}, Japanese strings \cite{kanda2023engineering}, and $N$-gram word sequences \cite{norimatsu2016fast}.
Their common solution is code mapping based on character frequencies and modifies the alphabet so that 
more frequent symbols have smaller ranks.
However, those methods assume that the alphabet has a strong bias in frequency of character occurrence, and we have no guarantees about their space efficiency.
In fact, we are unaware of any nontrivial theoretical guarantees.
Most of other studies on the double-array construction address compression aspects 
such as \cite{yata07compact} proposing a DAG-like compression of common suffixes or~\cite{kanda16compression} using linear-regression techniques, 
or address dynamic settings (e.g., \cite{kanda2017rearrangement}), which go beyond the scope of this paper.

\section{Preliminaries}

\block{Tries}
We can regard a trie as a directed labeled tree $G=(V, E)$, where $V = \{1,2,\dots,M\}$ is the set of nodes and
$E \subseteq V\times \Sigma \times V$ is the set of edges for the alphabet $\Sigma=\{1,2,\dots,\sigma\}$.
We write $(i,c,j) \in E$ for an edge from node $i \in V$ to an edge $j \in V$ with label $c \in \Sigma$.
By definition, there is exactly one node $u$ that has no in-coming edges $(\cdot,\cdot,u)$, which is the root of the trie.
Given the trie $G$ stores a string $S$, then there is a path from the root to a node $v$ such that reading the edge labels along this path gives $S$.

\block{Double-Arrays}
The double-array represents $E$ using two one-dimensional arrays, called $\barr$ and $\carr$, of the same length $N$.
The double-array arranges nodes $i \in V$ onto the arrays and assigns unique IDs $\node{i} \in \{1,2,\dots,N\}$ to them.
If and only if there exists an edge $(i,c,j) \in E$, the double-array satisfies the following conditions:

\begin{enumerate}
    \item $\barr[\node{i}]+c=\node{j}$, modelling that the $c$-th child of $i$ is $j$, and
    \item $\carr[\node{j}]=\node{i}$, modelling that the parent of $j$ is $i$.
\end{enumerate}

Because of these conditions, $\barr$ and $\carr$ can include vacant elements, i.e., $N$ can be larger than $M$.
We fix $N=\max\{\node{i} \mid i \in V\}$ unless otherwise noted since array elements with indexes greater than $\max \{\node{i} \mid i \in V\}$ are never used and redundant.
We assume vacant elements store $\None$.
As long as the above conditions are satisfied, the double-array allows any arrangement of nodes;
therefore, there are multiple arrangements, resulting in different values of $N$.

\block{Problem Formulation}
In the remaining of this paper, we study pattern matching of strings over the alphabet $[1,n]$ involving a wildcard symbol $\circ$.
The wildcard symbol $\circ$ matches with any of the characters in $[1,n]$.
We stipulate that two strings with wildcards $S_1$ and $S_2$ can match only if they have the same length.
We also say that a string $S_1$ occurs in $S_2$ if $S_2$ has a substring that matches with $S_1$ (involving wildcard symbols).
By doing so, we can now express our problem in terms of matching with wildcards:

\begin{problem}[Space Optimal Double-Array (\SODA{})]\label{problem:soda}
Given a set $\mathcal{S} = \{ S_1,\ldots, S_n\}$ with $S_i \in \{ i,\circ \}^{\sigma}$ for all $i\in [1,n]$, find the shortest string $S$ such that for all $i\in[1,n]$, $S_i$ has a match in $S$.
\end{problem}
By definition, even if $\mathcal{S} = \{ \circ^n \}$, the solution of Problem~\ref{problem:soda} is of length $n$.
\begin{example}
    If none of the strings contains the wildcard symbol $\circ$, then $s = s_1 s_2 \cdots s_n$ is the solution of this \SODA{} instance. Otherwise, the solution can be shorter.
    Given $\mathcal{S} = \{S_1,S_2\}$ such that $S_1$ (resp.\ $S_2$) has a wildcard symbol on every even (resp.\ odd) position, then $S[1..\sigma] = 2 1 2 1 \cdots$ of length $\sigma$ matches both $S_1$ and $S_2$.
\end{example}

For tiny alphabet sizes $\sigma \le 3$ it is easy to solve \SODA{}, but we are unaware of any solutions for larger (even still constant) alphabet sizes.
\begin{lemma}
    For the case $\sigma = 2, 3$, \SODA{} can be solved in polynomial time.
\end{lemma}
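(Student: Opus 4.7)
The plan is to reduce \SODA{} for $\sigma\in\{2,3\}$ to a packing problem and solve it by a constant-size case analysis on the shape counts. The key observation is that each string $S_i\in\{i,\circ\}^\sigma$ uses only the character $i$ as its nonwildcard symbol, and this character is unique to $S_i$. Hence, once we pick an offset for each $S_i$ in $S$, the positions at which $S$ is forced to hold the character $i$ (the ``marks'' of $S_i$) are automatically disjoint from the marks of any $S_j$ with $j\ne i$. Thus \SODA{} reduces to a shape-packing problem: each $S_i$ contributes a shape $P_i\subseteq[1,\sigma]$ describing the mark offsets inside its length-$\sigma$ window, and we want the smallest $|S|$ such that every shape can be translated into $[1,|S|]$ with its window fitting in $[1,|S|]$ and all translated marks pairwise disjoint; positions never used as marks can be filled with arbitrary characters.

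For $\sigma=2$ the nontrivial shapes are $\{1\}$, $\{2\}$, and $\{1,2\}$ with counts $n_1$, $n_2$, $n_{12}$. First I would establish the natural lower bounds on $|S|$: the total number of marks gives $|S|\ge 2n_{12}+n_1+n_2$, the window constraint gives $|S|\ge 2$, and the boundary restrictions (a $\{1\}$-mark cannot lie at position $|S|$ and a $\{2\}$-mark cannot lie at position $1$) force one extra empty slot exactly when $n_{12}=0$ and precisely one of $n_1,n_2$ is nonzero. Next I would give a matching greedy construction placing the $\{1,2\}$-pairs at $(1,2),(3,4),\ldots$, then the $n_1$ singletons at the following positions, then the $n_2$ singletons, inserting one padding slot at the appropriate end in the degenerate case. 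This yields a closed-form optimum computable in $O(n)$ time.

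For $\sigma=3$ there are seven nontrivial shapes: the singletons $\{1\},\{2\},\{3\}$, the consecutive pairs $\{1,2\}$ and $\{2,3\}$, the distance-two pair $\{1,3\}$, and the triple $\{1,2,3\}$. I would follow the same blueprint: compute the shape counts $n_T$, evaluate $|S|_{\min}$ as the maximum of a constant-size family of linear lower bounds in the $n_T$, and construct $S$ greedily by placing triples first, then consecutive pairs, then the distance-two pairs (interleaving two $\{1,3\}$-shapes into four consecutive positions), then singletons, while respecting the boundary restrictions at positions $1,2,|S|-1,|S|$. The new ingredient beyond $\sigma=2$ is the $\{1,3\}$-shape, whose two marks lie at distance two around a wildcard gap; two such shapes interleave perfectly, but an odd leftover either absorbs a nearby $\{2\}$-singleton into its gap or costs one empty slot.

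The hard part will be keeping the $\sigma=3$ case analysis clean rather than a sprawling enumeration: one must track how $\{1,3\}$-shapes interact with each other (through the parity of their count) and with the boundary-restricted singletons, how an all-$\{1,2\}$ or all-$\{2,3\}$ sequence of consecutive pairs incurs one empty slot for the trailing window, and how mixing the two consecutive-pair types eliminates this overhead. Since the number of shape types is $O(1)$, the case analysis has constant size and the algorithm runs in $O(n\sigma)=O(n)$ time overall.
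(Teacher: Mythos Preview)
Your proposal is correct and follows essentially the same approach as the paper: both observe that for $\sigma\in\{2,3\}$ there are only constantly many wildcard patterns (``shapes''), so \SODA{} reduces to a simple packing/greedy argument on the shape counts. Your write-up is considerably more careful than the paper's two-line sketch---which for $\sigma=2$ just names a greedy order ($i\circ$, then $\circ i$, then $ii$) and for $\sigma=3$ merely remarks that the middle-wildcard shape $i\circ i$ is the only one allowing two strings to interleave---so your explicit lower bounds, matching constructions, and attention to boundary and parity effects add rigor, but the underlying idea is the same.
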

\begin{proof}
    For $\sigma = 2$, the input strings cannot model holes, meaning that $\circ$ must be either a prefix or a suffix of an input string. We therefore can start building the output string $Q$ greedily from left to right, first taking input strings of the form $S_i = i \circ$, then of the form $S_i = \circ i$, and finally of the form $S_i = i i$, where we always fill up the available wildcard.
    For $\sigma = 3$, we can have a wildcard in the middle. Only such strings can be combined, giving a new string without wildcards. The other cases are analogous to the case for $\sigma = 2$.
\end{proof}

We can reduce the problem of finding the minimal length for $\carr$ to \SODA{}.
Given $\barr[\node{i}] = x$,
we produce $S_{\node{i}}$ by copying
$\carr[x+1..x+\sigma]$ to $S_{\node{i}}$ and replacing all symbols different to $\node{i}$ with $\circ$.

\newcommand*{\RDHPProblem}{\textsc{RDHP}}
\newcommand*{\SCSProblem}{\textsc{SCS}}

\section{Hardness of \SODA}

In what follows, we want to study the computational complexity of \SODA{}.
We call $k$-\SODA{} the decision version of the problem \SODA{}.
For a given integer $k$,
the problem $k$-\SODA{} asks whether the length of the shortest string $S$ is at most $\sigma + k$.
This problem is a special case of the \textsc{Shortest Common Superstring}~(\SCSProblem) problem with wildcards.
Unfortunately, we have the following result for $k$-\SODA{}.

\begin{theorem}\label{thmKSODA}
    $k$-\SODA{} is NP-complete.
\end{theorem}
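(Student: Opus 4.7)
The plan is to first argue that $k$-\SODA{} lies in $\mathsf{NP}$, and then to establish $\mathsf{NP}$-hardness by a polynomial-time reduction from Sparse Matrix Compression (SMC), whose hardness for the restricted variant we need is established in \cref{thmSMC}.

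For membership in $\mathsf{NP}$, a certificate is the string $S$ itself (of length $\sigma+k$) together with offsets $d_1,\ldots,d_n\in[0,k]$. A verifier checks in $O(n\sigma)$ time that for every $i$ and every position $p\in[1,\sigma]$ with $S_i[p]=i$, we have $S[d_i+p]=i$. Both the witness length and the verifier running time are polynomial, so $k$-\SODA{} is in $\mathsf{NP}$.

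For $\mathsf{NP}$-hardness, I would reduce from the variant of SMC handled in \cref{thmSMC}, in which each row of the input matrix uses only a single non-blank symbol that is unique to the row. Given such an instance with rows $r_1,\ldots,r_n$ of width $\sigma$, build the \SODA{} instance by setting $S_i[p]=i$ for every non-blank column $p$ of $r_i$ and $S_i[p]=\circ$ elsewhere. Shifting row $i$ by $d_i$ in the SMC instance corresponds bijectively to placing $S_i$ at offset $d_i$ in the superstring $S$. A collision between two shifted rows at some column arises precisely when two patterns $S_i,S_j$ try to place their respective characters $i\ne j$ at the same position of $S$; such a collision is forbidden in both problems. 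Therefore the minimum compressed matrix width coincides, up to an additive offset of $\sigma$, with the minimum length of $S$, and choosing $k$ accordingly completes the reduction.

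The main obstacle is justifying the restricted form of SMC used above, i.e., proving in \cref{thmSMC} that SMC remains $\mathsf{NP}$-hard even when each row uses a single row-unique symbol. I expect this to follow by a careful adaptation of the Graph $3$-Colorability reduction sketched in \cite[SR13]{DBLP:books/fm/GareyJ79}: one dedicated symbol per vertex ensures the single-symbol-per-row property, while auxiliary rows corresponding to edges enforce the colouring constraints through column conflicts. Once that restricted hardness is in place, the translation to $k$-\SODA{} outlined above is essentially a change of notation and alphabet identification.
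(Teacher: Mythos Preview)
Your proposal is correct and follows essentially the same route as the paper: identify $k$-\SODA{} with Sparse Matrix Compression (Problem~\ref{problemSR13}) and invoke \cref{thmSMC}. Two small clarifications are worth making. First, your ``restricted variant'' concern is unnecessary: Problem~\ref{problemSR13} is already formulated with a $\{0,1\}$ matrix and the condition $a_{ij}=1\iff b_{s(i)+j-1}=i$, so each row $i$ is by definition encoded by the single symbol $i$; the translation $S_i[p]=i$ iff $a_{ip}=1$ is then a straight change of notation, exactly as you describe. Second, your speculated proof sketch for \cref{thmSMC} (vertex rows plus auxiliary edge rows) differs from the paper's actual reduction, which uses only one row per vertex and encodes adjacencies in triples of columns so that a shift in $\{0,1,2\}$ is a colour choice; but since you are citing \cref{thmSMC} as a black box, this does not affect the correctness of your plan.
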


The problem has appeared in the literature as the \textsc{Sparse Matrix Compression} or \textsc{Compressed Transition Matrix}~\cite[Sect.~4.4.1.3]{martin2010scientific} problem:
\begin{problem}[{\cite[Chapter A4.2, Problem SR13]{DBLP:books/fm/GareyJ79}}]\label{problemSR13}
Given an $m\times n$ matrix $A$ with entries $a_{ij}\in \{ 0, 1 \}$, $1 \leq i \leq m$, $1 \leq j \leq n$, and a positive integer $k\leq mn$,
determine whether there exists a sequence $(b_1, \ldots, b_{n+k})$ of integers $b_i$,
each satisfying $0 \leq b_i \leq m$, and a function $s:\{1,\ldots, m\} \rightarrow \{1,\ldots, k\}$ such that, for $1\leq i\leq m$ and $1 \leq j \leq n$, the entry $a_{ij} = 1$ if and only if $b_{s(i)+j-1}=i$.
\end{problem}
This problem is NP-hard, even if $k=3$ by a reduction from three-coloring (see \cref{thmSMC}\JO[~in the appendix]{}).
\J{\begin{theorem}\label{thmSMC}
        \textsc{Sparse Matrix Compression} is NP-hard, even when the maximum shift is upper bounded by $k = 3$.
    \end{theorem}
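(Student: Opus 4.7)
My plan is to reduce from $3$-Colorability. Given a graph $G = (V, E)$ with $V = \{1, \dots, m\}$ and edges enumerated $e_1, \dots, e_q$, I build an $m \times n$ matrix $A \in \{0,1\}^{m \times n}$ with $n = 3q$ by placing, for each edge $e_i = (u_i, v_i)$, the value $1$ at the two cells $(u_i, 3i)$ and $(v_i, 3i)$, and $0$ everywhere else. I fix $k = 3$ and interpret the shift $s(v) \in \{1, 2, 3\}$ of row $v$ as the color assigned to vertex $v$.

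For the forward direction I take a proper $3$-coloring $\chi$ of $G$, set $s := \chi$, assign $b_{s(u_i) + 3i - 1} := u_i$ and $b_{s(v_i) + 3i - 1} := v_i$ for every edge $e_i$, and leave every remaining position of $b$ as $0$. The key check is that no two such assignments collide: within a single edge this is immediate from $\chi(u_i) \ne \chi(v_i)$, while between distinct edges $e_i, e_j$ with $i < j$ a collision would require some shift difference $s(w) - s(w')$ to equal $3(j - i) \ge 3$, which is impossible because shifts lie in $\{1, 2, 3\}$. The remaining $0$-cells impose only the weak constraint $b_p \neq w$ at their target position $p$, which holds automatically because every assigned $b_p$ is either $0$ or a vertex distinct from $w$ (noting that a $1$-cell and a $0$-cell within the same row $w$ sit in different columns and hence map to different positions of $b$).

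Conversely, any valid compression yields a proper $3$-coloring via $\chi(v) := s(v)$: for each edge $e_i$ the two $1$-cells in column $3i$ force $b_{s(u_i)+3i-1} = u_i$ and $b_{s(v_i)+3i-1} = v_i$, so $s(u_i) = s(v_i)$ would collapse these two positions and demand $u_i = v_i$, contradicting that $e_i$ is an edge of $G$.

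I expect the main obstacle to be the calibration of the column spacing: it must be at least $k$ in order to prevent accidental collisions of $1$-cells from unrelated edges (since shift differences range over $\{-(k-1), \dots, k-1\}$), while being chosen minimally to keep the construction tidy. With spacing exactly $k = 3$ these two conditions are simultaneously met and the $0$-cells remain inert, so the reduction cleanly transfers NP-hardness from $3$-Colorability to Sparse Matrix Compression with maximum shift $3$.
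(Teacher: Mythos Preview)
Your proof is correct and takes a different route from the paper. The paper builds a vertex-indexed $n\times 3n$ matrix: one block of width~$3$ per vertex, with a $1$ in the first cell of block~$j$ of row~$i$ whenever $i=j$ or $(v_i,v_j)\in E$. You instead build an edge-indexed $m\times 3q$ matrix: one block of width~$3$ per edge, with exactly the two endpoints receiving a $1$ in that block's designated column. In both encodings the shift $s(v)\in\{1,2,3\}$ is read as the colour of~$v$, and the converse direction (valid compression $\Rightarrow$ proper $3$-colouring) goes through identically.

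Where your construction pays off is the forward direction. In an edge-indexed block only the two endpoints of a single edge compete, so a proper colouring separates them, and your spacing argument cleanly rules out cross-block interference; the $0$-cells are then automatically inert, as you verify. The paper's sketch argues only the converse direction, and in fact the forward implication fails for the vertex-indexed matrix as stated: block~$j$ carries a $1$ in row~$j$ \emph{and in every row adjacent to~$j$}, and a proper $3$-colouring does not force the neighbours of~$j$ to receive pairwise distinct colours. Already for the star $K_{1,3}$ (which is $2$-colourable) the centre's block has four rows with a~$1$ but only three shifts, so no compression exists. Your incidence-style encoding sidesteps this issue and gives a complete, self-contained reduction.
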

    \begin{proof}
    Given a graph $(V,E)$ with $V := \{v_1, \ldots, v_n\}$ and edges $E$,
    we create an $n \times 3n$ matrix $M$ in the spirit of an adjacency matrix, where each adjacency matrix cell is expanded by three columns for assigning three colors.
    The details are as follows.
    We initialize $M$ by zero.
    Each row encodes the color of a vertex.
    We partition the $3n$ columns of the $i$-th row in blocks of size three,
    and write in the first entry of the $j$-th block a '1' if $(v_i,v_j) \in E$ or $i=j$
    (so each node has an artificially created self-loop).
    The semantics is that, if we think of the three entries of a block as an assignment for the colors red, blue, and green, then we initially color all vertices as red.
    By shifting the $i$-th row by one or two, we change the color of $v_i$ to green or blue, respectively.
    The color assignment is well-defined since such a shift shifts the '1's in all blocks of $v_i$ at the same time.
    Now, if we can shift all rows by an offset within $\{0,1,2\}$ such that we obtain the above sequence $(b_1,\ldots, b_{n+3})$, then the graph is three color-able.
\end{proof}
}
Although proofing Theorem~\ref{thmKSODA},
such a reduction makes no assumption on the range, in which we expect non-zero entries in a row. This is untypical to our use case of the double-array, where we only expect non-zero entries in a range of $\sigma$ adjacent entries.
We are unaware of restricted NP-hard problems based on three coloring, where the edges can be arranged that the region of non-zero entries can be bounded.

For our proof, we therefore follow a different approach, with which we obtain NP-hardness for $\sigma \in \Omega(\lg n)$.
To this end, we modify the proof of NP-completeness for the \SCSProblem{} problem~\cite{gallant1980finding}
which gives a reduction from the \textsc{Restricted Directed Hamiltonian Path} problem defined below.

\begin{problem}[\textsc{Restricted Directed Hamiltonian Path}~(\RDHPProblem{})~\cite{gallant1980finding}]\label{problem:rdhp}
Given a directed graph $G = (V, E)$, a designated start node $s\in V$
and a designated end node $t\in V$ such that
$s$ does not have incoming edges,
$t$ does not have out-going edges, and all nodes except $t$ have out-degree greater than $1$,
answer whether there exists a path from $s$ to $t$ that goes through each node exactly once.
\end{problem}

\begin{lemma}[Lemma~1 of \cite{gallant1980finding}]
    \RDHPProblem{} is NP-complete.
\end{lemma}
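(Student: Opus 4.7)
The plan is to establish membership in NP and then to prove NP-hardness by a reduction from a degree-restricted variant of Directed Hamiltonian Cycle.

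Membership in NP is immediate: a candidate solution can be given as a permutation of $V$ starting at $s$ and ending at $t$, and verifying that every consecutive pair belongs to $E$ takes polynomial time.

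For NP-hardness, my reduction would start from Directed Hamiltonian Cycle restricted to graphs in which every vertex has both in-degree and out-degree exactly $2$, a restriction that is classically known to remain NP-complete. Given such an instance $G=(V,E)$, pick any vertex $u\in V$ and construct $G'=(V',E')$ by splitting $u$ into two fresh vertices $s$ and $t$. Formally, put $V' := (V\setminus\{u\})\cup\{s,t\}$, keep every edge of $E$ that does not touch $u$, replace each outgoing edge $(u,v)\in E$ by $(s,v)\in E'$, and replace each incoming edge $(w,u)\in E$ by $(w,t)\in E'$.

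The restrictions of \RDHPProblem{} are then satisfied by construction: $s$ inherits the two out-edges of $u$ and has no incoming edges, $t$ inherits the two in-edges of $u$ and has no outgoing edges, and every other vertex retains its original out-degree of $2>1$. The correspondence between solutions is also immediate: a Hamiltonian cycle in $G$ traversing $u$ as $u\to v_{i_1}\to\cdots\to v_{i_{n-1}}\to u$ yields the Hamiltonian path $s\to v_{i_1}\to\cdots\to v_{i_{n-1}}\to t$ in $G'$, and conversely any Hamiltonian path from $s$ to $t$ in $G'$ can be closed back into a Hamiltonian cycle through $u$ in $G$.

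I expect the main obstacle to be justifying the starting point, namely that Directed Hamiltonian Cycle remains NP-complete under the exact in-/out-degree $2$ restriction; everything else is a routine vertex-splitting argument. An alternative, more self-contained approach would be to reduce directly from standard Directed Hamiltonian Path and then pad low-degree nodes with auxiliary gadgets to force out-degree at least $2$; the difficulty there lies in designing these gadgets so that every Hamiltonian path is forced to traverse the dummy vertices at prescribed positions, and the vertex-splitting strategy above neatly avoids this subtlety.
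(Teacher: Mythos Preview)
The paper does not supply its own proof of this lemma; it simply cites it as Lemma~1 of Gallant, Maier, and Storer (1980) and uses it as a black box. So there is no in-paper argument to compare against.

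Your proposal is sound as a standalone proof. The vertex-splitting from a Hamiltonian cycle instance into a Hamiltonian $(s,t)$-path instance is standard, and your check of the three structural restrictions (no in-edges at $s$, no out-edges at $t$, out-degree ${>}1$ elsewhere including at $s$) goes through exactly as you say. The only point that needs an external reference is your starting assumption that Directed Hamiltonian Cycle remains NP-complete when every vertex has in- and out-degree exactly~$2$; this is indeed classical (it follows, e.g., from Plesn\'{\i}k's 1979 result on planar digraphs with degrees bounded by~$2$, after padding degree-$1$ vertices), so citing such a source closes the argument. Your own remark that this is ``the main obstacle'' is accurate, and the alternative padding-gadget route you mention is closer in spirit to what Gallant et al.\ actually do in their paper: they reduce from unrestricted Directed Hamiltonian Path and attach small gadgets to raise low out-degrees, rather than starting from a degree-restricted cycle problem. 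Your split-a-vertex approach is cleaner and avoids the gadgetry, at the cost of importing a slightly stronger known hardness result.
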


\subsection{A reduction to \SCSProblem{}}
We first revisit the proof of the following Theorem~\ref{thm:superstring} by Gallant et al.~\cite{gallant1980finding} who reduced \RDHPProblem{} to \SCSProblem{}.
We will not use the restriction of primitiveness, and will only use the length restriction $H=3$.
Later, we will show how to modify the reduction into an instance of \SODA{}.

\begin{theorem}[Theorem~1 of Gallant et al.~\cite{gallant1980finding}]\label{thm:superstring}
    \SCSProblem{} is NP-complete.
    Even if all input strings have the length $H$, for any integer $H \geq 3$,
    and that each character occurs at most once in $S$,
    the problem remains NP-complete.
\end{theorem}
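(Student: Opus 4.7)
The plan is to reduce \RDHPProblem{} to \SCSProblem{}. Given an instance $(G = (V, E), s, t)$, I would construct in polynomial time a family of length-$H$ strings and a target length $L$ such that the resulting \SCSProblem{} admits a superstring of length at most $L$ in which every character appears at most once if and only if $G$ admits a Hamiltonian $s$-$t$ path. Membership in NP is immediate, since a guessed superstring of polynomial length is verifiable in polynomial time by checking each input as a substring.

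I would first treat the case $H = 3$ and then extend to larger $H$ by padding. Each vertex $v \in V$ receives a unique character $c_v$, and each edge $e = (u,v)$ is encoded as a length-$3$ string $c_u \, \gamma_e \, c_v$, where $\gamma_e$ is a fresh character unique to $e$. Two such edge strings can overlap by exactly one position precisely when they correspond to edges sharing a vertex in an ``end-to-start'' orientation, and each such overlap saves one position in the superstring. The target length $L$ is chosen so that reaching it requires the maximum possible number of simultaneous overlaps, namely the $|V|-2$ overlaps realised by a sequence of $|V|-1$ consecutive edges.

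Under the ``each character occurs at most once'' restriction, the unique character $c_v$ can anchor at most one incoming and one outgoing overlap. Hence the overlap pattern in any optimal superstring defines a simple directed path on $V$, and reaching $L$ is possible only if that path covers every vertex. Combined with the structural requirements on $s$ and $t$ in \RDHPProblem{}, this forces the path to be a Hamiltonian $s$-$t$ path; conversely, any such path induces a superstring of length exactly $L$.

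The main obstacle will be to reconcile the character-uniqueness restriction with the fact that edges not lying on the chosen Hamiltonian path must still appear as substrings of the superstring, since their endpoint characters would otherwise collide with the positions already occupied by the path. I would resolve this by refining the construction so that each vertex-edge incidence carries its own fresh character, and by adding a small family of auxiliary length-$H$ strings per vertex that force the superstring to select exactly one incoming and one outgoing incidence per vertex to participate in the overlap chain. The extension from $H = 3$ to arbitrary $H \geq 3$ is then handled by padding every string with fresh, pairwise distinct characters at the ends, which preserves the overlap analysis while shifting $L$ by an additive constant per string.
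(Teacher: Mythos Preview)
Your proposal has a genuine gap: the reduction you sketch does not work, and the fix you gesture at is precisely the hard part that you have not supplied.

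First, you seem to read ``each character occurs at most once in $S$'' as a constraint on the superstring. In Gallant et al.\ it is a constraint on each individual input string; the paper explicitly says it will not use this restriction (calling it ``primitiveness'') and only reproves the $H=3$ case. Under the correct reading your strings $c_u\gamma_e c_v$ already satisfy the condition, so it gives you no structural leverage. Under your reading the instance is immediately infeasible: any vertex of out-degree at least two yields two input strings sharing the character $c_v$ at the same position, and they cannot both be substrings of a superstring in which $c_v$ occurs once.

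Second, and independently of that misreading, the one-string-per-edge encoding cannot be made tight. Every edge string must occur in the superstring, and edges off the chosen path also carry endpoint characters that can overlap with one another by one. The maximum number of length-$1$ overlaps achievable is then governed by path covers of the line digraph of $G$, not by Hamiltonicity of $G$, so no choice of $L$ separates yes- from no-instances. Your remedy (``fresh characters per incidence plus auxiliary strings that force exactly one incoming and one outgoing incidence per vertex'') is a restatement of the goal, not a construction.

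The paper's (i.e., Gallant et al.'s) reduction uses a different mechanism. For each non-sink vertex $v_i$ with out-neighbours $v_{c(i,1)},\ldots,v_{c(i,\delta(i))}$ one builds strings $A_{i,j}=w_i\,v_{c(i,j)}\,w_i$ and $B_{i,j}=v_{c(i,j)}\,w_i\,v_{c(i,j+1)}$ (indices cyclic), together with connectors $C_i=v_i\,\#\,w_i$. The key point is that the $A_{i,\cdot}$ and $B_{i,\cdot}$ for fixed $i$ achieve overlap $2$ only with each other, forcing them into a single chain $A_{i,j}\bowtie B_{i,j}\bowtie A_{i,j+1}\bowtie\cdots$ whose only degree of freedom is the starting index $j$; that choice makes both ends of the chain the same child $v_{c(i,j)}$, which is then glued to $C_i$ on the left and $C_{c(i,j)}$ on the right. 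A counting argument shows the target length $2m+3n$ is met exactly when these choices trace a Hamiltonian $s$--$t$ path. This ``cyclic rotation of the neighbourhood selects the successor'' idea is the heart of the proof and is absent from your proposal.
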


Given $G = (V, E)$, let $|V| = n, |E|=m$, and $V = \{v_1 = s, \ldots, v_{n} = t\}$,
consider the
alphabet
$\Sigma =
    \{ \ddagger, \#, \$ \}
    \cup (V\setminus \{ v_1 \})
    \cup W
$, where $W = \{ w_1, \ldots, w_{n-1} \}$ are additional letters.
The conceptional idea is that $\ddagger$ and $\$$ will appear only once in the input set of strings and delimit our superstring at the start and at the end. These two special symbols, together with $\#$ will appear only in the strings $C_i$ that are used for modelling the connection of nodes.

\newcommand*{\Children}{\textup{children}}
For each node $v_i$ with $i \in \{1,\ldots, n-1\}$,
let $\delta(i)$ denote the out-degree of node~$v_i$,
and let $c(i,j)$ denote the node number corresponding to the $((j\bmod\delta(i))+1)$-th child of $v_i$ in some (arbitrary) order.
Given $\Children(i) := \{ v_{c(i,1)}, \ldots v_{c(i,\delta(i))} \}$ denotes the set of children of $v_i$,
we define the strings
$A_{i,j} := w_i v_{c(i,j)} w_i \in \Sigma^3$ and
$B_{i,j} := v_{c(i,j)} w_i v_{c(i,{j+ 1})} \in \Sigma^3$
for each $v_{c(i,j)} \in \Children(i)$.
Let $\mathcal{A}_i := \{A_{i,j} \mid v_{c(i,j)} \in R_{i}\} \cup \{B_{i,j} \mid v_{c(i,j)} \in R_{i}\}$ be the set containing these strings.
Finally, let
$C_1 = \ddagger \# w_1$,
$C_i = v_i \# w_i$ for $i\in\{ 2,\ldots, n-1\}$,
$C_n = v_n\#\$ \in \Sigma^3$, and $\mathcal{C} := \{C_1, \ldots, C_n\}$.
Our claim is that $G$ has a directed Hamiltonian path if and only if the set of strings
\(
\mathcal{T} = \mathcal{C} \cup \bigcup_{i=1}^{n-1} \mathcal{A}_i
\subset \Sigma^3
\)
has a superstring of length $2m+3n$.

For the proof we define, for two strings $X$ and $Y$, $X \bowtie Y$ to be the concatenation of $X$ and $Y[\ell+1..]$,
where $Y[1..\ell]$ is the longest prefix of $Y$ being a suffix of $X$.
Then $A_{i,j} \bowtie B_{i,j} = w_i v_{c(i,j)} w_i v_{c(i,{j+ 1})}$ and $B_{i,j} \bowtie A_{i,j+1} = v_{c(i,j)} w_i v_{c(i,{j+ 1})} w_i$.

Now assume there is a Hamiltonian path $v_1 = v_{i_1}, v_{i_2},\ldots, v_{i_{n-1}}, v_{i_n} = v_n$.
For each edge $(v_{i_k},v_{i_{k+1}})$,
we connect the strings in $\mathcal{A}_{i_k}$ with $\bowtie$ in a particular order to obtain the following string:
\begin{align}\label{eqSik}
    \begin{split}
        S_{i_k} &= A_{i_k,j} \bowtie B_{i_k,j} \bowtie A_{i_k,j+1} \ldots \bowtie A_{i_k,j+\delta(i_k)-1} \bowtie B_{i_k,j+\delta(i_k)-1}
        \\
        &= w_{i_k} v_{c(i_k,j)} w_{i_k} v_{c(i_k,j+1)} w_{i_k} \cdots v_{c(i_k,j+\delta(i_k)-1)} w_{i_k} v_{c(i_k,j+\delta(i_k))},
    \end{split}
\end{align}
where $j$ is such that $c(i_k,j) = i_{k+1}$.
Notice that from the definition of $c(\cdot)$, $v_{c(i_k,j+\delta(i_k))}= v_{c(i_k,j)}$,
and $|S_{i_k}| =2\delta(i_k)+2$.
Next, in the order of the Hamiltonian path,
we connect $C_{i_k}$ and $C_{i_{k+1}}$ with $S_{i_k}$ in between (see Fig.~\ref{fig:structure}),
starting with $\ddagger\#w_1 = C_1$ and ending with $v_n\#\$ = C_n$,
i.e., in the order
\begin{align*}
    S : & = C_1\bowtie S_1\bowtie C_{i_2}\bowtie S_{i_2}\bowtie C_{i_3}\bowtie S_{i_3}\bowtie \cdots\bowtie S_{i_{n-1}}\bowtie C_n                               \\
        & =\ddagger \# w_1\bowtie S_1\bowtie v_{i_2}\# w_{i_2}\bowtie S_{i_2}\bowtie \ldots \bowtie v_{i_{n-1}}\# w_{i_{n-1}}\bowtie S_{i_{n-1}}\bowtie v_n\#\$,
\end{align*}
where each string is connected to the previous one with an overlap of 1 symbol.
$S$ is a superstring of $\mathcal{T}$.

On the one hand, the total length of $S$ is
$2m+2(n-1) + 2\cdot 2 + n-2 = 2m+3n$,
which we obtain by the facts that
\begin{enumerate}
    \item $\sum_{i=1}^{n-1}|S_i| = 2m+2(n-1)$ with $\sum_{i=1}^{n-1} \delta(i) = m$,
    \item the two strings $C_1, C_n$ further contribute a length of $2$ each (for $\ddagger\#$ and $\#\$$ occurring neither in $S_{1}$ nor in $S_{i_{n-1}}$),
          and
    \item the strings $C_i$ ($i\in\{2,\ldots,n-1\}$) further contribute a length of $1$ each (for the $\#$'s that do not occur in the $S_i$'s).
\end{enumerate}

On the other hand, suppose we are given a superstring $Q$ of $\mathcal{T}$ of length $2m+3n$.
The total number of strings in $\mathcal{T}$ is $2m+n$.
\begin{itemize}
    \item Since all strings in $\mathcal{T}$ are distinct and not a substring of another, a lower bound on the shortest superstring is $3 + (2m+n-1) = 2m+n+2$,
          where all strings are connected in some order by overlapping two symbols with its previous string.
    \item Let us consider $C_i$ with $i\in\{2,\ldots,n-1\}$ as a substring appearing in $Q$.
          Because $\#$ only occurs in the strings of $\mathcal{C}$, which have all no overlaps, the largest overlap is gained by the combinations $C_i \bowtie A_{i,j}$ for some $j$ or
          $B_{i,j} \bowtie C_{c(i,j+1)}$, which are overlaps of lengths one.
\item Similarly, the two strings $C_1$ and $C_n$ cannot overlap with any other string on one side, and at most $1$ on the other side with $A_{1,j}$ for some $j$ or $B_{i,k}$ for some $k$ with $c(i,k+1) = n$.
\end{itemize}
Thus, the lower bound can further be improved to $2m+n+ 2(n-2)+2 + 2 = 2m+3n$.
This also implies that $Q$ must start with $\ddagger\#w_1= C_1$ and end with $v_n\#\$= C_n$,
since otherwise, $C_1$, or the string following $C_n$ could not have exploited any overlap with its previous string,
and the resulting string would have been longer.

It is left to show that we can read a Hamiltonian path from $Q$, which starts with $v_1$ and ends with $v_n$.
To this end, consider any substring of $Q$ between consecutive occurrences of $\#$
corresponding to the occurrence of $C_{i_k}$ and $C_{i_{k+1}}$,
where the first symbol is $w_{i_k}$ and the last symbol must be $v_{i_{k+1}}$.
Since $C_{i_k}$ must overlap by one with its following string, the following string must be $A_{i_k,j} \in \mathcal{A}_{i_k}$ for some $j$.
Furthermore, since strings in $\mathcal{A}_{i_k}$ can only overlap by two with a string in $\mathcal{A}_{i_k}$,
all and only elements of $\mathcal{A}_{i_k}$ must occur
starting with $A_{i_k,j} = w_{i_k}v_{c(i_k,j)}w_{i_k}$ for some $j\in\{1,\ldots,\delta(i_k)\}$, and ending with
$B_{i_k,j+\delta(i_k)-1} = v_{c(i_k,j+\delta(i_k)-1)} w_{i_k} v_{c(i_k,j+\delta(i_k))} = v_{c(i_k,j+\delta(i_k)-1)} w_{i_k} v_{c(i_k,j)}$.
From here on, we have already used up all strings starting with $w_{i_k} v_{c(i_k,j)}$, and thus the maximum overlap to gain is of length 1.
However, at this point we can append $\bowtie C_{i_{k+1}}$, letting it overlap with $B_{i_k,j+\delta(i_k)-1}$ for maximal overlap.
Hence, $v_{i_{k+1}} := v_{c(i_k,j)}$, and we obtain
a Hamiltonian path $v_1 = v_{i_1},\ldots, v_{i_n}=v_n$ by iterating these steps for all $k$.

\begin{figure}[htbp]
    \centerline{\includegraphics[width=0.9\textwidth,page=1]{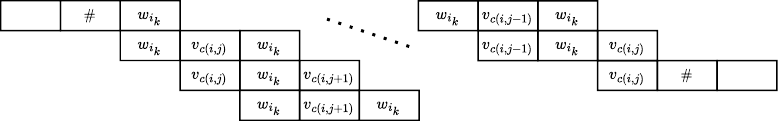}
    }
    \caption{The structure of string $S_{i_k}$ in Eq.~\ref{eqSik} connected with $C_{i_k}$ to its left and $C_{i_{k+1}}$ to its right, where $c(i_k,j) = i_{k+1}$.
        The leftmost blank box can be $v_{i_k}$ or $\ddagger$ (if $i_k = 1$).
        The rightmost blank box can be $w_{c(i_k,j)}$ or $\$$ (if $i_k = n$).}\label{fig:structure}
\end{figure}

\subsection{A reduction to \SODA}
In what follows, we modify each of the strings obtained in the reduction to \SCSProblem{} to strings of length $\ell = O(\log n)$, which will be specified more precisely later, for \SODA{}.
We first consider assigning a distinct bit string of
length $O(\log n)$ to each of the symbols in $V\cup W$.
We further modify these bit strings
by applying the string morphism $\phi$ with $\phi(0) = 01$ and $\phi(1) = 10$, and prepending $\alpha := 01000011$ to the resulting bit string.
For $u\in V\cup W$, we denote the resulting bit string as $b(u)$,
whose length is denoted as $\ell'$.
The idea is that all bit strings $b(u)$ start with $\alpha$, while the suffixes $b(u)[\alpha+1..]$ have the same number of ones and zeros thanks to the definition of $\phi$.

Further, let $\overline{b(u)}$ denote the bit string where $0$ and $1$ are flipped.
With another function $g$, we will later map $0$ to the wildcard symbol $\circ$ and $1$ to a symbol unique to the respective input string.
If we extend $\bowtie$ for wildcard matching such that, for two strings $X$ with $Y$ with wildcard symbols $\circ$,
$X \bowtie Y$ denotes the shortest string $S$ such that $X$ and $Y$ match as a prefix and a suffix of $S$, respectively.
We say that two strings with wildcards $X$ and $Y$ of the same length are \emph{compatible} if $|X| = |X \bowtie Y|$.
By construction, $g(b(c))$ is only compatible with $g(\overline{b(c)})$ among all strings in $\{g(b(d)) : d \in V \cup W \}$, for every $c \in V \cup W$.

\newcommand*{\effL}[1][]{\ensuremath{f_{\textup{L}#1}}}
\newcommand*{\effM}[1][]{\ensuremath{f_{\textup{M}#1}}}
\newcommand*{\effR}[1][]{\ensuremath{f_{\textup{R}#1}}}

Remembering the definition of the set of input strings $\mathcal{T}$,
we assign each string of $\mathcal{T} \cup V \cup W$ a unique ID via a mapping $\iota$ from these strings to integers.
Now, let $xyz \in\mathcal{T}$, such that $x\in V\cup W\cup\{\ddagger\}$, $y\in V\cup W\cup\{\#\}$, and $z\in V\cup W\cup\{\$\}$.
We map $xyz$ to a string $f(xyz) = (\effL[,i](x)\cdot \effM[,i](y)\cdot \effR[,i](z)) \in \{i,\circ\}^{\ell}$
where $i = \iota(xyz)$,
and $\effL[,i], \effM[,i], \effR[,i]$ are defined as follows:

\(
\mycases{\effL[,i](c) = }{
    i^{3\ell'}                                  \text{if~} c =\ddagger, \text{~and} \\
        g_i(b(c))g_i(\overline{b(c)})\circ^{\ell'}  \mbox{~otherwise};\ 
    }
\)\JO{

}\(
    \mycases{\effM[,i](c) =}{
        i^{3\ell'}                                  \text{if~} c =\#, \text{~and} \\
        g_i(\overline{b(c)})\circ^{\ell'}g_i(b(c))  \mbox{~otherwise};\ 
    }
\)\JO{

}\(
    \mycases{\effR[,i](c) = }{
        i^{3\ell'}                                  \text{if~} c =\$, \text{~and} \\
        \circ^{\ell'}g_i(b(c))g_i(\overline{b(c)})  \mbox{~otherwise}. \
    }
    \)

\JO{\begin{eqnarray*}
    \effL[,i](c) &=& \begin{cases}
        i^{3\ell'}                                 & \text{if~} c =\ddagger \\
        g_i(b(c))g_i(\overline{b(c)})\circ^{\ell'} & \mbox{otherwise}
    \end{cases}\\
    \effM[,i](c) &=& \begin{cases}
        i^{3\ell'}                                 & \text{if~} c =\# \\
        g_i(\overline{b(c)})\circ^{\ell'}g_i(b(c)) & \mbox{otherwise}
    \end{cases}\\
    \effR[,i](c) &=& \begin{cases}
        i^{3\ell'}                                 & \text{if~} c =\$ \\
        \circ^{\ell'}g_i(b(c))g_i(\overline{b(c)}) & \mbox{otherwise.}
    \end{cases}
\end{eqnarray*}
}
$g_i$ is a morphism such that $g_i(0) = \circ$ and $g_i(1) = i$.
Therefore, each $f_\cdot(\cdot)$ gives a string of length $3\ell'$,
and $\ell = 9\ell'$.
If the ID $i$ is not of importance, we omit it in the subscripts, e.g., we write $\effL(x)$ instead of $\effL[,i](x)$.

For technical reasons, we further add some extra strings of length $3\ell'$ to the pool, which we call \emph{patch strings}, and consider the set
\begin{eqnarray*}
    \mathcal{T'} &=& \{ f(xyz) \mid xyz \in \mathcal{T}\}
    \cup\{ \effM[,\iota(w_k)](w_k), \effL[,\iota(w_k)](w_k) \mid k\in \{1,\ldots,n-1\}\}\\
    &&\cup\{ \effM[,\iota(v_k)](v_k), \effR[,\iota(v_k)](v_k) \mid k\in \{2,\ldots,n\}\}.
\end{eqnarray*}
We will show that $\mathcal{T'}$
has a superstring of length $(2m+3n)3\ell'$ if and only if $G$ has a Hamiltonian path from $s$ to $t$.
We have only to show that the given strings in $\mathcal{T}'$
can be made to overlap as the strings of $\mathcal{T}$,
and must overlap analogously to a shortest superstring containing all occurrences of strings in $\mathcal{T}'$.
To this end we use the following facts:
\begin{enumerate}
    \item The string $g_i(b(c)) g_i(\overline{b(c)}$ has $\ell'$ wildcard symbols. \label{itFactEllWildcards}
    \item The strings $\effL[,i](c), \effM[,j](c),$ and $\effR[,k](c)$ are compatible for $c \in U \cup V$ and any IDs $i,j,$ and $k$. \label{itFactEffCompat}
          The string $\effL[,i](x) \bowtie \effM[,j](x) \bowtie \effR[,k](x)$ has length $3\ell'$ and contains no wildcards.
    \item While encodings of any $c \in U \cap V \setminus \{V_1\}$ appear in any combination of $\effL(c), \effM(c),$ and $\effR(c)$,
          the special symbols $\ddagger,\#,$ and $\$$ occur only as substrings $\effL(\ddagger), \effM(\#),$ and $\effR(\$)$ of the strings in $\mathcal{T}$.
\end{enumerate}

On the one hand, suppose we are given a Hamiltonian path $v_{i_1}, \ldots v_{i_n}$ of $G$.
Using the construction of the string $S$ in the reduction to \SCSProblem{} with Fact~\ref{itFactEffCompat},
we can construct a superstring of the claimed length that contains the occurrences of all strings in $\mathcal{T}'$.
We use the extra strings of length $3\ell'$ in $\mathcal{T'}$
to ``patch'' locations where only two strings from $\{f(xyz)\mid xyz\in\mathcal{T}\}$ overlap,
with the intent to enforce the elimination of all wildcards in the shortest superstring.
In Figure~\ref{fig:structure},
we can apply the extra strings $f_M(w_{i_k})$ and $f_R(v_{c(i,j)})$ for the leftmost occurrences of $w_{i_k}$ and $v_{c(i,j)}$,
and $f_L(w_{i_k})$ and $f_M(v_{c(i,j)})$ for the rightmost occurrences of $w_{i_k}$ and $v_{c(i,j)}$, cf. Fig.~\ref{fig:structure_enc}.

\begin{figure}
    \centering
    \includegraphics{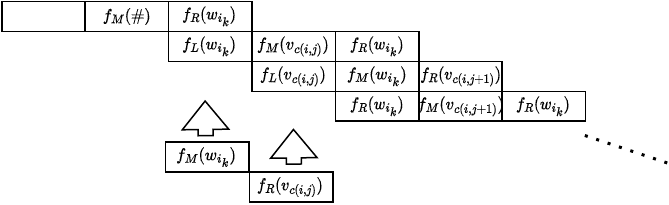}
    \caption{Constructing a shortest superstring for the input string set $\mathcal{T}'$ with the same building blocks as in Figure~\ref{fig:structure}.
        The patch strings fill up all wildcard symbols.}
    \label{fig:structure_enc}
\end{figure}

On the other hand, suppose a superstring $Q$ of $\mathcal{T}'$ of length $(2m+3n)3\ell'$ exists.
The number of non-wildcard symbols in $f_L(c)$, $f_M(c)$, or $f_R(c)$
for $c\in V\cup W$ is $\ell'$ each,
so the total number of non-wildcard symbols in $\mathcal{T'}$ is
$3\ell'$ for each of the $2m$ strings in $\bigcup_{i=1}^{n-1}\{ f(xyz)\mid xyz\in \mathcal{A}_{i} \}$,
and $5\ell'$ for each of the $n-2$ strings $\{ C_i \mid i\in\{ 2,\ldots, n-2\} \}$,
and $7\ell'$ for $C_1$ and $C_n$. Furthermore, the $4n-4$ extra strings contain $\ell'$ non-wildcard symbols each due to Fact~\ref{itFactEllWildcards},
for a grand total of $(2m+3n)3\ell'$.
Since non-wildcard symbols are distinct between different strings in $\mathcal{T'}$
and thus cannot overlap,
it follows that
$Q$ cannot contain any wildcard characters and that all wildcard characters must be ``filled'' when connecting the strings in $\mathcal{T}'$.

While in general, strings in $\mathcal{T}'$ can overlap in a way that does not respect the symbol boundaries in their underlying string in $\mathcal{T}$,
we claim that for any shortest superstring $Q$ of length $(2m+3n)3\ell'$,
the strings in $\mathcal{T}'$ must be connected in a way that they are {\em block aligned},
meaning that the length of the overlap with any other string in $\mathcal{T'}$ must be a multiple of $3\ell'$.

\begin{figure}[htbp]
    \begin{minipage}{0.3\linewidth}
        \includegraphics[width=\textwidth]{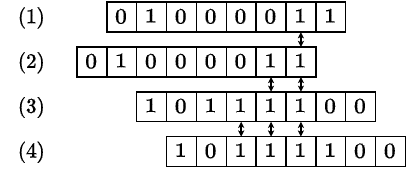}
    \end{minipage}
    \begin{minipage}{0.7\linewidth}
    \caption{Non-sub-block-aligned overlapping of $\alpha$ and $\overline{\alpha}$ to fill the leftmost wildcard. The figure assumes that there are no wildcards preceding (2).
        The wildcard (corresponding to $0$) in (2) may overlap with previously filled blocks. Since the $1$s are mapped to IDs distinct to each string, they must not overlap.
        For example, the overlap of (2) and (1) tries to fill the 2nd $0$ of (2) with another $\alpha$ (1), but is interfered with the last $1$ of (2) and the second to last $1$ of (2).
        Similarly, the overlap of (2) and (3) tries also to fill the 2nd $0$ of (2) with $\overline{\alpha}$ (3). The overlap of (3) and (4) tries to fill the first $0$ of (3) $\overline{\alpha}$ with $\overline{\alpha}$. Thus, the only way to fill the first $0$ in $\alpha$ or the first $0$ in $\overline{\alpha}$ is to align $\alpha$ and $\overline{\alpha}$.
    }\label{fig:alpha_overlap}
    \end{minipage}
\end{figure}

\begin{lemma}\label{lemma:block_aligned}
	If there exists a string $Q$ of length $(2m+3n)3\ell'$ such that all strings of $\mathcal{T}'$ have a match in $Q$,
then every $S \in \mathcal{T'}$ must have a unique {\em block-aligned} occurrence in $Q$, i.e.,
    the ending position of $S$ in $Q$ must be a multiple of $3\ell'$.
\end{lemma}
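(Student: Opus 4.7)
The plan is to exploit the rigid prefix $\alpha = 01000011$ (with complement $\overline{\alpha} = 10111100$) that every non-special encoded sub-block starts with, in two stages: first showing that every string of $\mathcal{T}'$ must start at a position divisible by $\ell'$ (\emph{sub-block alignment}), then upgrading this to a position divisible by $3\ell'$.

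As established in the paragraph preceding the lemma, $|Q| = (2m+3n)3\ell'$ equals the total number of non-wildcard symbols across all strings in $\mathcal{T}'$, and every such symbol is the string-unique ID $\iota(S)$ of its origin string $S$. Hence $Q$ contains no wildcards, non-wildcard symbols from distinct strings never share a position of $Q$, and every wildcard inside every string occurrence must be covered by a non-wildcard of exactly one other string. The uniqueness part of the lemma follows immediately: the multiset of $Q$-positions carrying the ID $\iota(S)$ equals the non-wildcard offsets of $S$, so it pins down $S$'s start position.

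For sub-block alignment, I would focus on any $\ell'$-wide window in $Q$ that contains a data sub-block $D$ starting with $g_i(\alpha) = \circ i \circ\circ\circ\circ i i$ (with $1$'s at local offsets $\{1,6,7\}$); the case in which $D$ begins with $g_i(\overline{\alpha})$ ($1$'s at $\{0,2,3,4,5\}$) is symmetric. The wildcard at offset $0$ of $D$ must be filled by a non-wildcard from some other string $S'$, whose available patterns are $g_j(\alpha)$, $g_j(\overline{\alpha})$, or a solid $j^{3\ell'}$ slice. A case analysis over the $8$ possible offsets of $S'$ relative to $D$ (sketched in Figure~\ref{fig:alpha_overlap}) shows that every non-aligned choice forces two distinct IDs $i\neq j$ to coincide at some other offset within the eight-symbol window, contradicting the counting argument; the only non-conflicting placement puts $g_j(\overline{\alpha})$ on top of $g_i(\alpha)$ at offset $0$, and a solid block is ruled out because it would place a $1$ at every position of the window. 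Propagating this observation left-to-right throughout $Q$ forces every data (and every solid) sub-block of every occurrence to start on a position divisible by $\ell'$.

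To upgrade to $3\ell'$-alignment, view $Q$ as a sequence of $3(2m+3n)$ sub-blocks. Each patch string occupies $3$ consecutive sub-blocks and each main string $f(xyz)$ occupies $9$, with a fixed wildcard-versus-data layout inherited from $\effL, \effM, \effR$ (respectively DDW, DWD, WDD, where D and W denote a data and an all-wildcard sub-block). The compatibility criterion obtained in the previous paragraph further shows that two data sub-blocks sharing a sub-block position of $Q$ must encode complementary patterns $g_{i_1}(b(c))$ and $g_{i_2}(\overline{b(c)})$ of the \emph{same} character~$c$. The solid $i^{3\ell'}$ blocks produced by $\ddagger, \#, \$$ inside $C_1, C_n$ and every inner $C_i$ cannot be shifted by $\ell'$ or $2\ell'$ without colliding with a neighbouring non-solid sub-block, so they act as rigid $3\ell'$-aligned anchors. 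Propagating these anchors along the chain of character-equality constraints, combined with the asymmetric position of the unique all-wildcard sub-block inside $\effL$, $\effM$, and $\effR$ (last, middle, first), forces every $\effL$/$\effM$/$\effR$-piece, and hence every $S\in\mathcal{T}'$, to start and end at a multiple of $3\ell'$. I expect this last propagation to be the main obstacle: the $\alpha/\overline{\alpha}$ analysis rules out $\ell'$-misalignments locally and cleanly, but showing that no globally consistent packing can shift every building block uniformly by $\ell'$ or $2\ell'$ modulo $3\ell'$ requires combining the solid-block anchors with the per-piece asymmetry of $\effL, \effM, \effR$.
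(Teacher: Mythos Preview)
Your first two stages---the counting argument (no wildcards in $Q$, each position filled by exactly one ID, hence uniqueness of placement) and the sub-block alignment via the $\alpha/\overline{\alpha}$ incompatibility---are correct and are precisely what the paper does.

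The third stage, however, is where the real work lies, and your proposal does not supply it. Your claimed anchor mechanism is not valid as stated: a solid block $i^{3\ell'}$ occupies three consecutive sub-blocks regardless of where it sits modulo $3\ell'$, and nothing in your argument pins those three sub-blocks to a $3\ell'$-boundary. All you know after Stage~2 is that data sub-blocks sharing a sub-block slot must encode complementary bitmaps of the same character, and that a solid sub-block cannot share its slot with any data sub-block. That is compatible with packings in which different strings sit at different residues modulo $3\ell'$, so there is no global shift argument and no immediate local contradiction either. Your final sentence (``Propagating these anchors along the chain of character-equality constraints, combined with the asymmetric position of the unique all-wildcard sub-block~\ldots'') is the entire lemma, not a step toward it.

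The paper's approach is fundamentally different here and shows why a soft propagation argument is unlikely to close the gap. It builds $Q$ greedily from the left, always filling the leftmost remaining wildcard, and performs an exhaustive case analysis (Cases~1--4 with primed variants and many subcases) on what string of $\mathcal{T}'$ can be placed next. Several subcases (1-H-2, 2$'$-B, 3-B-1, 3-B-2, 4$'$-B) are \emph{locally} consistent non-block-aligned placements; they are only ruled out by a global resource-exhaustion argument---e.g., the unique patch string $\effM(w_i)$ or $\effM(v_k)$ gets consumed in the wrong place, making it impossible to later accommodate the corresponding $f(C_k)$. That kind of ``this patch is already used up'' reasoning is what actually drives block-alignment, and it has no counterpart in your anchor/propagation sketch. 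If you want to avoid the full case analysis you would need a genuinely new combinatorial invariant; the solid blocks by themselves do not provide one.
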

\begin{proof}
From the above arguments, if such a string $Q$ exists, then it cannot contain any wildcards.
    Consider building $Q$ from left to right, starting from a wildcard string of length $(2m+3n)3\ell'$, each time connecting a string in $\mathcal{T'}$
    to fill the leftmost remaining wildcard.
    Suppose the leftmost remaining wildcard is at position $p$ and consider what strings in $\mathcal{T}'$ can come next in order to determine the symbols for $Q[p..]$.

    \begin{figure}[h]
        \begin{minipage}{0.72\linewidth}
        \centerline{\includegraphics*[page=5,clip=true,width=\linewidth]{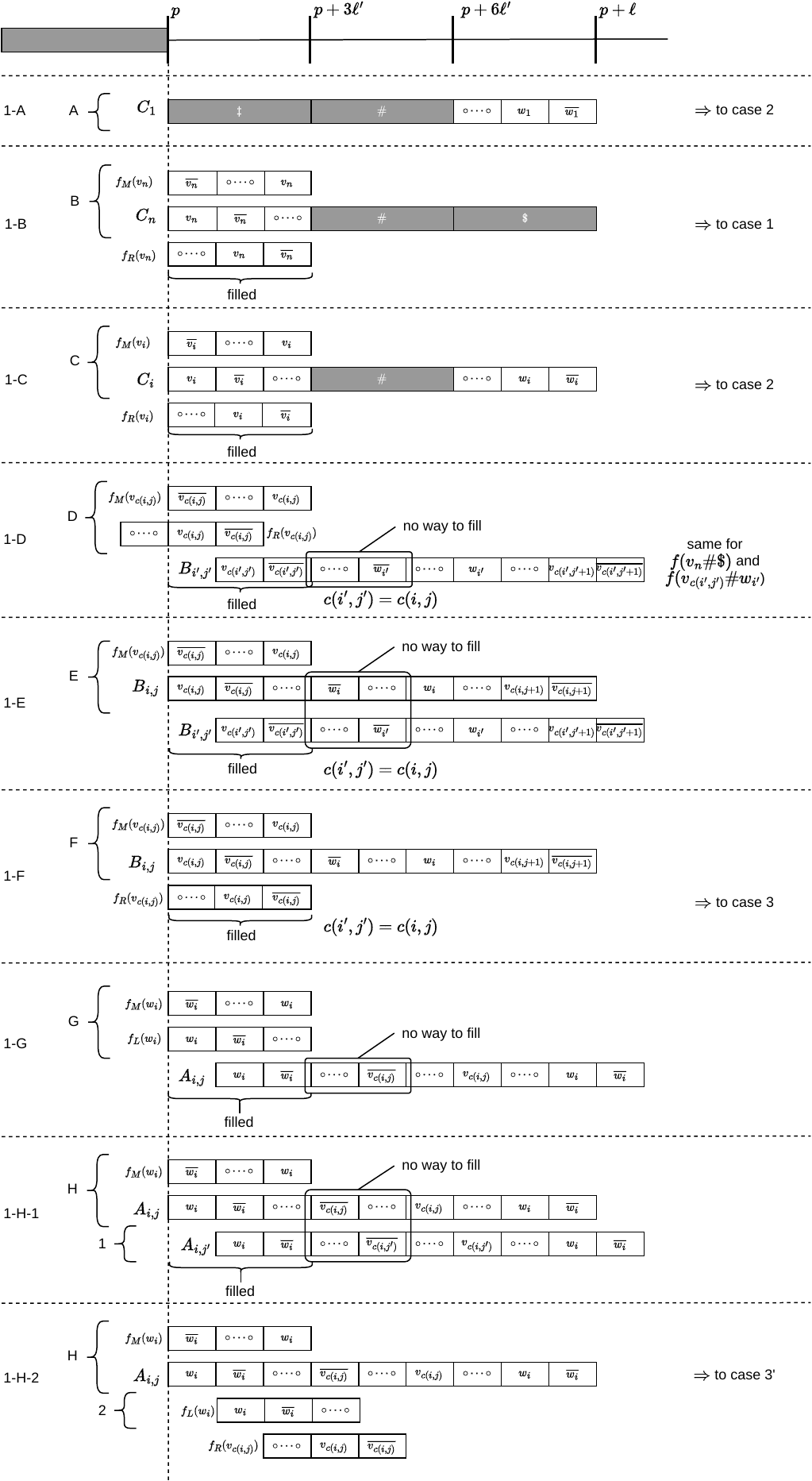}
        }
        \end{minipage}
        \begin{minipage}{0.27\linewidth}
        \caption{Overview of all cases considered in the proof of Lemma~\ref{lemma:block_aligned}. Subcases of Cases 1,2,2',3,3',4, and~4' are grouped in bold frames.
            A red subcase means that it is not possible to continue.
            Any other case has an outgoing edge to either a specific subcase, or to a case $X$ meaning that we can continue with any of the subcases of $X$.
            Nevertheless, despite that green subcases have an outgoing edge, they will not lead to the optimal solution.
            They are called \emph{prohibitive}, and subject in the proof of Lemma~\ref{lemma:block_aligned}.
            All other cases join the input strings of $\mathcal{T'}$ in a block-aligned manner.
        }\label{figure:transitionGraph}
        \end{minipage}
    \end{figure}

    We first argue that the strings must be \emph{sub-block aligned} in $Q$,
    i.e.,
    the ending position in $Q$, or equivalently, the overlap with any previous string in $\mathcal{T}'$ must be a multiple of $\ell'$.
    Assume that all strings previously connected are sub-block aligned, and consider filling the leftmost wildcard remaining in $Q[p..]$.
    Due to the definition of $f_L,f_M,f_R$, only compatible sub-blocks can overlap and a maximal overlap of (non-all wildcard) compatible sub-blocks will always result in a filled sub-block, cf.\ Fact~\ref{itFactEffCompat}.
    Thus, remaining non-filled sub-blocks must either be all wildcards, $c$, or $\overline{c}$ for some $c$.
    All non-all wildcard sub-blocks of strings in $\mathcal{T}'$
    and remaining non-filled/non-all wildcard sub-blocks start
    with either $\alpha$ or $\overline{\alpha}$.
Fig.~\ref{fig:alpha_overlap} shows all the possible cases
    when trying to fill the left-most wildcard of an $\alpha$ (or $\overline{\alpha}$)
    when the strings are not sub-block aligned.
    In each case, the occurrences are not compatible and thus can only be sub-block aligned.
    Thus, by induction, all occurrences must be sub-block aligned.

    Next, we show that the strings must be block-aligned in $Q$.
    Since the strings connected must be sub-block aligned,
    there are a total of $4$ main cases divided according to the shape of the prefix of $Q[p..]$.
    The transition between the cases is depicted in Fig.~\ref{figure:transitionGraph}, while the cases themselves
    are depicted in Fig.~\ref{figBlockAlignedCompact} (Figs.~\ref{figure:block_aligned_rem0} to~\ref{figure:block_aligned_rem21}\JO[~in the appendix]{} are larger versions).
    Case 1 is when $Q[p..]$ consists only of wildcard strings, which is the initial and final state when building $Q$ successively.
We thus can consider the building of $Q$ as a transition of cases starting from Case 1 and ending at Case 1.
    During the course of studying all cases, we observe that some cases have transitions but do not block-align.
    We call these cases \emph{prohibitive}, and will subsequently show that prohibitive cases cannot lead to a shortest superstring.

    \newcommand*{\Filled}{\XBox}
    \newcommand*{\Empty}{\Square}

    For the detailed description, 
for any $c\in V\cup W$, we denote $g_i(b(c))$ as $c$ and $g_i(\overline{b(c)})$ as $\overline{c}$.
Further,
    we write \Filled{} for any string of length $\ell'$ containing no wildcards,
    and $\Empty := \circ^{\ell'}$ for the opposite case.
    To discriminate the cases, $Q[p..]$ is considered as
    \begin{description}
        \item[Case 2]:  $\effR(w_i) \circ \cdots$,
        \item[Case 2']:  $\effR(v_{c(i,j})) \circ \cdots$,
        \item[Case 3]:  $\effM(w_i) \effR(v_{c(i,j)}) \circ \cdots$, matching with a sub-block suffix of $B_{i,j}$,
        \item[Case 3']:  $\effM(v_{c(i,j)}) \effR(w_i) \circ \cdots$, matching with a sub-block suffix of $A_{i,j}$,
        \item[Case 4]:  $\overline{w_i} w_i \Filled \effR(v_{c(i,j}) \circ \cdots$, matching with a sub-block suffix of $B_{i,j}$,
        \item[Case 4']:  $\overline{v_{c(i,j)}} v_{c(i,j)} \Filled \effR(w_i) \circ \cdots$, matching with a sub-block suffix of $A_{i,j}$,
    \end{description}
    for some $i$ and $j$.

    Starting with Case 1, which also treats $Q$ in its initial state storing only wildcards, we are free to choose any of the strings in $\mathcal{T'}$.
    The first case (1-A) is special in that it involve the insertion of only one string of $\mathcal{T'}$.
    \begin{description}
        \item[1-A]: Start with $f(C_1)$. The rightmost sub-block with wildcards in $Q[p..] \bowtie f(C_1)$ is $\effR(w_1)$, a transition to Case 2.
    \end{description}
    In all other cases, regardless of the inserted string of $\mathcal{T}$, we can fill up its leftmost wildcards only with one of the two patch strings $\effM(v_i)$  or $\effM(w_i)$ for some $i$.
    We start with the cases for $\effM(v_i)$:
    \begin{description}
        \item[1-B]: Start with $\effM(v_i) \bowtie f(C_n) = \Filled \overline{v_n} v_n \# \$$. To fill the leftmost wildcards,
        we can only apply the patch string $\effR(v_n) \in \mathcal{T'}$ such that $Q[1..p+\ell-1] = \Filled^3 \# \$$ is completely filled.
        We stay in Case 1.
        \item[1-C]: Start with $\effM(v_i) \bowtie f(C_{i}) = \Filled \overline{v_{i}} v_i  \effM(\#) \effR(w_{i})$ for $i \in [2..n-1]$.
        To fill the wildcards of $\overline{v_{i}} v_i$, we must patch it similar to Case 1-B with $\effR(v_i) \in \mathcal{T'}$.
        This time, the right part of $\effM(v_i) \bowtie f(C_{i}) \bowtie \effR(v_i) = \Filled^3 \# \effR(w_{i})$ still contains wildcards, a transition to Case~2.
\item[1-D]: Start with the patch strings $\effR(v_i)$ and $\effM(v_i)$ for $p \ge \ell'$.
        We have $\effR(v_i) \bowtie \effM(v_i) = \Filled \overline{v_i} v_i$.
        Since $\effL(v_i)$ is not part of $\mathcal{T'}$, the only choice left is to fill it up with $C_i$ ($i \in [2..n]$) or $B_{i',j}$ with $i = c(i',j)$.
        In these cases, $Q[p..] \bowtie B_{i',j}$ starts with $\Filled^3 \Empty \overline{w_{i'}} \Empty$ and $Q[p..] \bowtie C_i$ starts with $\Filled^3 \Empty \overline{w_{i'}} \#^\ell$.
        While we can still apply $\effM(w_{i'})$ for $Q[p..] \bowtie B_{i',j}$, in none of the cases we can completely fill the wildcards with any of the remaining strings in $\mathcal{T'}$, so Case 1-D cannot lead to a solution.
        \item[Cases 1-E and 1-F]:
        Start with $\effM(v_{c(i,j)}) \bowtie f(B_{i,j}) = \Filled \overline{v_{c(i,j)}} v_{c(i,j)} \effM(w_i) \effR(v_{c(i,j)})$.
        Our aim is to patch the leftmost wildcards with some string containing $v_{c(i,j)} \overline{v_{c(i,j)}}$, for which we have two options:
        \begin{description}
            \item[1-F]: Applying the patch string $\effR(v_{c(i,j)})$ gives us $Q[p..] \bowtie \effM(v_{c(i,j)}) \bowtie f(B_{i,j}) \bowtie \effR(v_{c(i,j)}) = \Filled^3 \effM(w_i) \effR(v_{c(i,j)}) \circ \cdots$, which is in Case~3.
            \item[1-E]: Applying $B_{i',j'}$ for some $(i',j') \neq (i,j)$ but $c(i,j) = c(i',j')$,
            lets $Q[p..]$ start with $\Filled^3 \overline{w_i} \overline{w_{i'}} w_i w_{i'}$, which does not match with any of the remaining strings in $\mathcal{T'}$.
        \end{description}
    \end{description}

    We now analyze the remaining cases, where $\effM(w_i)$ must be applied:
    \begin{description}
        \item[1-G]: Start with the patch strings $\effM(w_i) \bowtie \effL(w_i)$, leaving a remaining suffix starting with $\overline{w_i} w_i$.
        This matches with $\effR(w_i)$, which is not a patch string, but a prefix of $f(A_{i,j})$.
        $\effM(w_i) \bowtie \effL(w_i) \bowtie f(A_{i,j})$ starts with $\Filled^3 \Empty \overline{v_{c(i,j)}} \Empty$.
        While we can still apply $\effM(v_{c(i,j)})$ here, we cannot completely fill the wildcards with any of the remaining strings in $\mathcal{T'}$, so Case 1-G cannot lead to a solution.
        \item[Cases 1-H]: Start with $\effM(w_i) \bowtie f(A_{i,j}) = \Filled \overline{w_i} w_i \effM(v_{c(i,j)}) \effR(w_i)$.
        Our goal is to fill the left part $\overline{w_i} w_i$.
        Symmetrically to Cases 1-E and 1-F we can apply $A_{i,j'}$ for some $j'$, or the patch string $\effL(w_i)$.
        \item[1-H-1]: $\effM(w_i) \bowtie f(A_{i,j}) \bowtie A_{i,j'}$ starts with $\Filled^3 \overline{v_{c(i,j)}} \overline{v_{c(i,j')}} v_{c(i,j)} v_{c(i,j')}$.
        Unfortunately, this substring does not match with any of the strings in $\mathcal{T'}$ because $c(i,j) \neq c(i,j')$.
        \item[1-H-2]: $Q[p..] \bowtie \effM(w_i) \bowtie f(A_{i,j}) \bowtie \effL(w_i) = \Filled^3 \overline{v_{c(i,j)}} \Empty v_{c(i,j)} \Empty w_i \overline{w_i} \circ \cdots$,
        a transition to Case 3'.
        This case is prohibitive because we did not block-align $\effL(w_i)$.
    \end{description}

    Case 2 considers $Q[p..] = \effR(w_i) \circ \cdots$.
    To fill the leftmost wildcards, we have to apply the patch string $\effM(w_i) \in \mathcal{T}$ and either
    \begin{description}
        \item[2-A]: the patch string $\effL(w_i) \in \mathcal{T}$, or
        \item[2-B]: $f(A_{i,j})$ for some $j$.
    \end{description}
    Application of Case 2-A gives $Q[p..] \bowtie \effR(w_i) \bowtie \effM(w_i) \bowtie \effL(w_i) = \Filled^3 \circ \cdots$, so we return to Case~1.
    Application of Case 2-B gives $Q[p..] \bowtie \effR(w_i) \bowtie f(A_{i,j}) = \Filled^3 \effM(v_{c(i,j)}) \effR(w_i) \circ \cdots$, which is the starting point of Case~3'.

    Case 2' considers $Q[p..] = \effR(v_{c(i,j)}) \circ \cdots$.
    Similar to Case~2, we have to apply $\effM(v_{c(i,j)}) \in \mathcal{T}$ to fill the leftmost wildcards.
    We thus start with $Q[p..] \bowtie \effM(v_{c(i,j)}) = \overline{v_{c(i,j)}} v_{c(i,j)} \Filled \circ \cdots$.
    Among the strings having $v_{c(i,j)} \overline{v_{c(i,j)}}$ as a substring, we are left with four options:
    \begin{description}
        \item[2'-A]: Apply $B_{i',j'}$ for any $i$ and $j$ with $c(i,j) = c(i',j')$.
        We have $\effR(v_{c(i,j)}) \bowtie B_{i',j'} = \Filled^3 \effM(w_{i'}) \effR(v_{c(i',j')})$, which is in Case 3.
\item[2'-B]: Apply $\effR(v_{c(i,j)})$ for $p \ge 3\ell'$. We have $Q[p..] \bowtie \effR(v_{c(i,j)}) = \Filled^3 \circ \cdots$.
        We return to Case~1.
        Because we applied $\effR(v_{c(i,j)})$ in a non-block-aligned manner, this case is prohibitive.
        \item[2'-C]: Apply $C_n$ in case that $c(i,j) = n$. With $\effR(v_{c(i,j)}) \bowtie f(C_n) = \Filled^3 \# \$$, this lets us again return to Case~1.
        \item[2'-D]: Apply $C_{c(i,j)}$. With  $\effR(v_{c(i,j)}) \bowtie f(C_{c(i,j)}) = \Filled^3 \# \effR(w_{c(i,j)})$, we move to Case~2.
    \end{description}

    Case 3 considers $Q[p..] = \effM(w_i) \effR(v_{c(i,j)}) \circ \cdots$.
    To fill the leftmost wildcards, we need to apply a string starting with $\effL(v_{c(i,j)})$,
    for which we can either use the patch string $\effL(v_{c(i,j)})$ itself, or $A_{i,j}$.
    As it turns out we need both strings, but the order of application matters:
    \begin{description}
        \item[3-A]: $Q[p..] \bowtie \effL(v_{c(i,j)}) = \Filled \overline{w_i} w_i \Empty \circ \cdots$.
        To fill up the leftmost wildcards we need a string containing $w_i \overline{w_i}$,
        for which $A_{i,j}$ only matches.
        However, $Q[p..] \bowtie \effL(v_{c(i,j)}) \bowtie A_{i,j} = \Filled^3 \Empty \Filled \overline{v_{c(i,j)}} \circ \cdots$.
        We cannot match any of the remaining strings in $\mathcal{T'}$ with this substring.
        \item[3-B]: $Q[p..] \bowtie f(A_{i,j}) = \Filled \overline{w_i} w_i \overline{v_{c(i,j)}} v_{c(i,j)} \Filled \effR(w_i) \circ \cdots$.
        Symmetric to Case~A, only the patch string $\effL(v_{c(i,j)})$ can fill up the leftmost wildcards, leading to
        $Q[p..] \bowtie f(A_{i,j}) \bowtie \effL(v_{c(i,j)}) = \Filled^{3} \overline{v_{c(i,j)}} v_{c(i,j)} \Filled \effR(w_i) \circ \cdots$.
        To fill the leftmost wildcards, we need a string containing $v_{c(i,j)} \overline{v_{c(i,j)}}$ as a substring, for which we have two options left:
        select the patch string $\effR(v_{c(i,j)})$ or $B_{i,j}$ (all other $B_{i',j'}$ do not match with the later $w_i$ occurrence).
        \begin{description}
            \item[3-B-1]: $Q[p..] \bowtie f(A_{i,j}) \bowtie \effL(v_{c(i,j)}) \bowtie \effR(v_{c(i,j)}) = \Filled^6 \effR(w_i) \circ \cdots$,
                a transition to Case~2.
                Since we have already used up $\effL(w_i)$, we cannot continue with Case~2-A, but have to continue with Case~2-B.
            \item[3-B-2]: $Q[p..] \bowtie f(A_{i,j}) \bowtie B_{i',j'} = \Filled^6 \overline{w_i} w_i \Filled \effR(v_{c(i,j}) \circ \cdots$,
                a transition to Case~4.
                Again, because we have already used $\effL(w_i)$, we need to continue with Case~4-B.
        \end{description}
        Because we applied $\effL(v_{c(i,j)})$ in a non-block aligned manner, both subcases are prohibitive.
    \end{description}

    Case 3' considers $Q[p..] = \effM(v_{c(i,j)}) \effR(w_i) \circ \cdots$.
    To fill the leftmost wildcards, we can apply $B_{i,j}$ or the patch string $\effR(v_{c(i,j)})$.
    \begin{description}
        \item[3'-A and 3'-B]: $Q[p..] \bowtie B_{i,j} = \Filled \overline{v_{c(i,j)}} v_{c(i,j)} \overline{w_i}  w_i \Filled \effR(v_{c(i,j}) \circ \cdots$.
        To fill the leftmost wildcards, we need a string having the substring $v_{c(i,j)} \overline{v_{c(i,j)}}$.
        From the strings of $\mathcal{T'}$  we can only choose the patch string $\effR(v_{c(i,j})$ or $B_{i',j'}$ with $c(i,j) = c(i',j')$ and $c(i,j+1) = c(i',j'+1)$.
        We start with the former.
        \item[3'-A]: $T[p..] \bowtie B_{i,j} \bowtie \effR(v_{c(i,j}) = \Filled^3 \overline{w_i}  w_i \Filled \effR(v_{c(i,j}) \circ \cdots$, a setting of Case~4.
        \item[3'-B]: $T[p..] \bowtie B_{i,j} \bowtie B_{i',j'}$ starts with $\Filled^3 \overline{w_i}  \overline{w_{i'}}$, but there is no string in $\mathcal{T'}$ matching $\overline{w_i}  \overline{w_{i'}}$.
\item[3'-C]:
            $T[p..] \bowtie \effR(v_{c(i,j)}) = \Filled \overline{v_{c(i,j)}} v_{c(i,j)} \effR(w_i) \circ \cdots$.
            Symmetrically to the previous case, we are in the need for a string containing $v_{c(i,j)} \overline{v_{c(i,j)}}$,
            where only $B_{i,j}$ is suitable (any other $B_{i',j'}$ does not match with the later $\effR(w_i)$).
            $T[p..] \bowtie \effR(v_{c(i,j)}) \bowtie B_{i,j}$ starts with $\Filled^3 \Empty \Filled \overline{w_i} w_i$.
            However, we have no string suitable for filling the leftmost wildcards.
    \end{description}

    Case 4 considers $Q[p..] = \overline{w_i} w_i \Filled \effR(v_{c(i,j}) \circ \cdots$.
    We need a string starting with $\effL(w_i)$ or $\effR(w_i)$.
    For the latter, no such string exists.
    For the former, we have two choices:
    the patch string $\effL(w_i)$ itself, or $B_{i,j}$.
    \begin{description}
        \item[4-A]:
            $Q[p..] \bowtie \effL(w_i) = \Filled^3 \effR(v_{c(i,j}) \circ \cdots$, a setting of Case~2'.
        \item[4-B]:
            $Q[p..] \bowtie B_{i,j} = \Filled^3 \overline{v_{c(i,j)}} v_{c(i,j)} \Filled \effR(w_i) \circ \cdots$, a setting of Case~4'.
    \end{description}

    Case 4' considers $Q[p..] = \overline{v_{c(i,j)}} v_{c(i,j)} \Filled \effR(w_i) \circ \cdots$.
    We need a string starting with $\effL(v_{c(i,j)})$ or $\effR(v_{c(i,j)})$.
    For the former, we have $B_{i,j}$, and for the latter the patch string $\effR(v_{c(i,j)})$ itself.
    \begin{description}
        \item[4'-A]:
            $Q[p..] \bowtie B_{i,j} = \Filled^3 \overline{w_i} w_i \Filled \effR(v_{c(i,j+1)}) \circ \cdots$, a setting of Case~4.
        \item[4'-B]:
            $Q[p..] \bowtie \effR(v_{c(i,j)}) = \Filled^3 \effR(w_i) \circ \cdots$, a setting of Case~2.
            This case is prohibitive.
    \end{description}

    Of all the cases we can encounter,
    Cases~1-H-2, 2'-B, 3-B-1, 3-B-2, and 4'-B are prohibitive.
    However, we claim that any transition that leads to a prohibitive case cannot lead to constructing $Q$, therefore showing the lemma.

    \begin{description}
        \item[1-H-2]:
        By selecting Case~1-H-2, we have used the patch string $\effL(w_i)$, leading us to Case~3', where we can only continue with Case~3'-A to Case 4.
        Here, we cannot select Case~4-A since we have already used $\effL(w_i)$ in Case~1-H-2.
        Thus, we can only continue with Case 4-B, which leads to case 4'.
        We prove later that using case 4'-B cannot be used to construct $Q$.
        The remaining case 4'-A leads to case 4,
        and thus again to case 4-B, and we cannot escape this cycle unless we use case 4'-B.

\item[2'-B]:
        Case~2'-B makes the choice to apply $T[p..] \bowtie \effM(v_{c(i,j)})$ and leads to Case~1.
        Note that $v_{c(i,j)} \neq v_1$ because $\effM(v_1)$ is not in $\mathcal{T}$.
        The problem with using $\effM(v_{c(i,j)})$ is that all strings $C_i$ of $\mathcal{C}$ with $i \ge 2$ appear only in Cases~1-B, 1-C, 2'-C, and 2'-D, and each of them requires a matching with $\effM(v_{c(i,j)}$.
        Thus, we end up with wildcards that we cannot fill.

        \item[3]:
        Case 3 can only be reached from case 1-F or case 2'-A (cf.~Fig.~\ref{figure:transitionGraph}).
        Both Cases~1-F and~2'-A use $\effM(v_{c(i,j)})$ for some $i,j$,
        and thus with the same argument for Case~2'-B,
        it follows that these transitions make it impossible to use a Case-1 transition for adding $C_i$ to $Q$.
        Thus the built $Q$ either contains wildcards or no $C_i$.
        Thus, Case 3 (and all its subcases) must not be used for building $Q$.
A consequence is that also Cases~1-F and~2'-A leading to Case~3 must not be used.

        \item[4'-B]:
        To reach 4'-B, from Case 1, the last transitions must be
        \begin{itemize}
            \item $\cdots $2'-A$ \to $ 3-B-2$ \to $ 4-B$ \to $ 4',
            \item $\cdots $1-F$ \to $ 3$ \to $ 3-B-2$ \to $ 4-B$ \to $ 4',
            \item $\cdots $2-B$ \to $ 3'-A$ \to $ 4-B$ \to $ 4', or
            \item $\cdots $1-H-2$ \to $ 3'-A$ \to $ 4-B$ \to $ 4'.
        \end{itemize}
        So any sequence of transitions must go through Cases~1-F, 1-H-2, 2-B, or 2'-A.
        On the one hand, as stated above, no transition to Case~3 must be used, so Cases~1-F and~2'-A cannot be used.

        On the other hand, the remaining Cases~1-H-2 and~2-B use the patch string $\effM(w_i)$.
        However, Case~4'-B leads us to Case~2, and all its subcases require the already used $\effM(w_i)$ to proceed.
        It follows that using Case~4'-B leads to a dead end, since we cannot access it from Case~1, from which we initially start.
    \end{description}
\end{proof}

From Lemma~\ref{lemma:block_aligned}, it follows that any superstring of $\mathcal{T}'$ of length $(2m+3n)3\ell'$ must be block-aligned,
implying that the strings $\{ f(xyz) \mid xyz\in \mathcal{T}\}$ can only overlap in the same way as $xyz$ can overlap, implying a superstring of length $2m+3n$ of $\mathcal{T}'$, and thus a Hamiltonian path from $s$ to $t$.

\begin{figure}
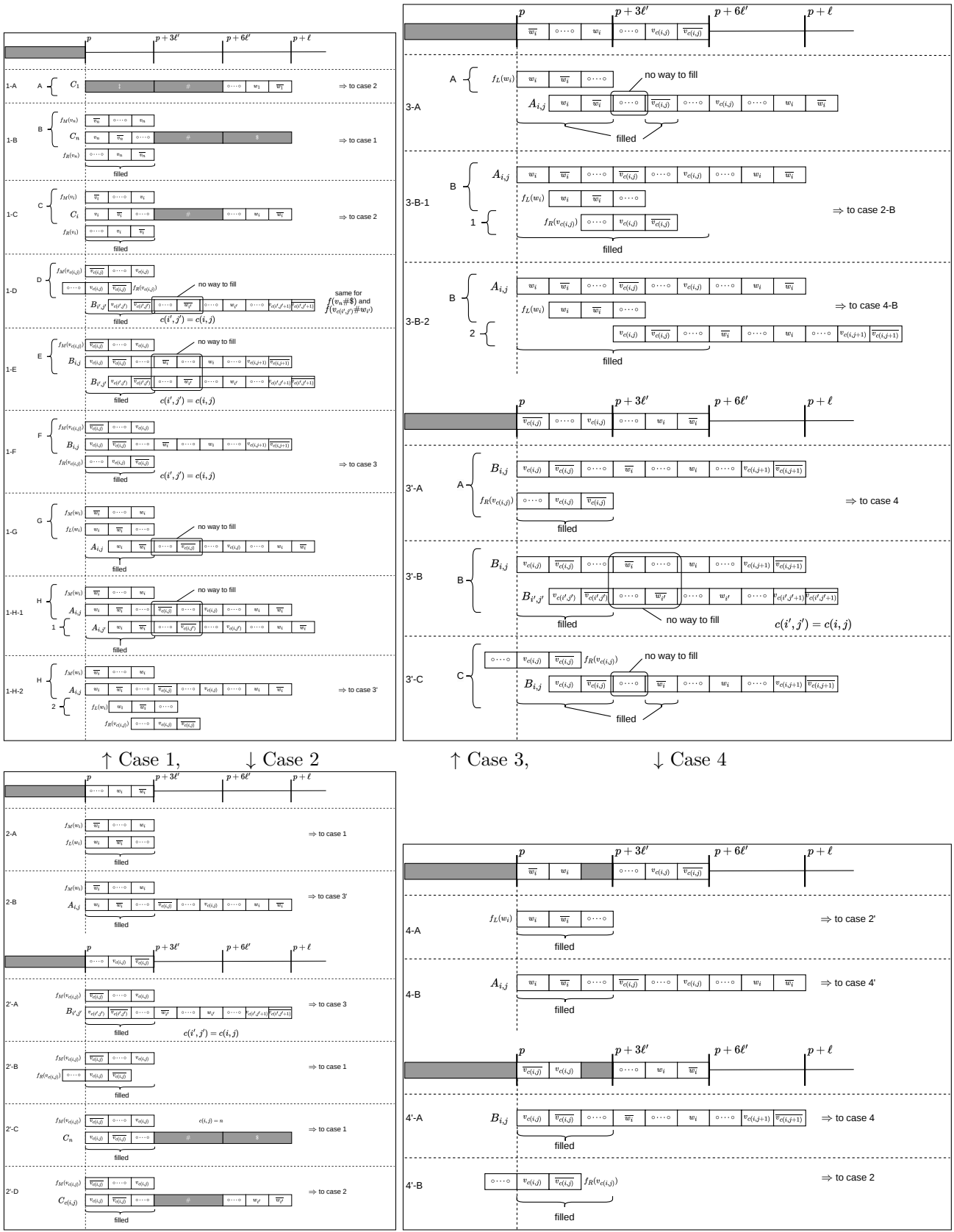

\makebox[\textwidth]{\centerline{\includegraphics*[page=1,clip=true,width=0.5\textwidth,frame]{img/cases.pdf}
        \includegraphics*[page=3,clip=true,width=0.7\textwidth,frame]{img/cases.pdf}
    }
}

\begin{minipage}{0.3\linewidth}
    \centering $\uparrow$ Case 1,\hspace{3em} $\downarrow$ Case 2
\end{minipage}
\begin{minipage}{0.65\linewidth}
    \centering $\uparrow$ Case 3,\hspace{6em} $\downarrow$ Case 4
\end{minipage}

\makebox[\textwidth]{\centerline{\includegraphics*[page=2,clip=true,width=0.5\textwidth,frame]{img/cases.pdf}
        \includegraphics*[page=4,clip=true,width=0.7\textwidth,frame]{img/cases.pdf}
    }

}\caption{Cases considered in the proof of Lemma~\ref{lemma:block_aligned}.}
\label{figBlockAlignedCompact}
\end{figure}

\JFigure{\begin{figure}[htbp]
    \centerline{\includegraphics*[page=1,clip=true,width=0.9\textwidth,frame]{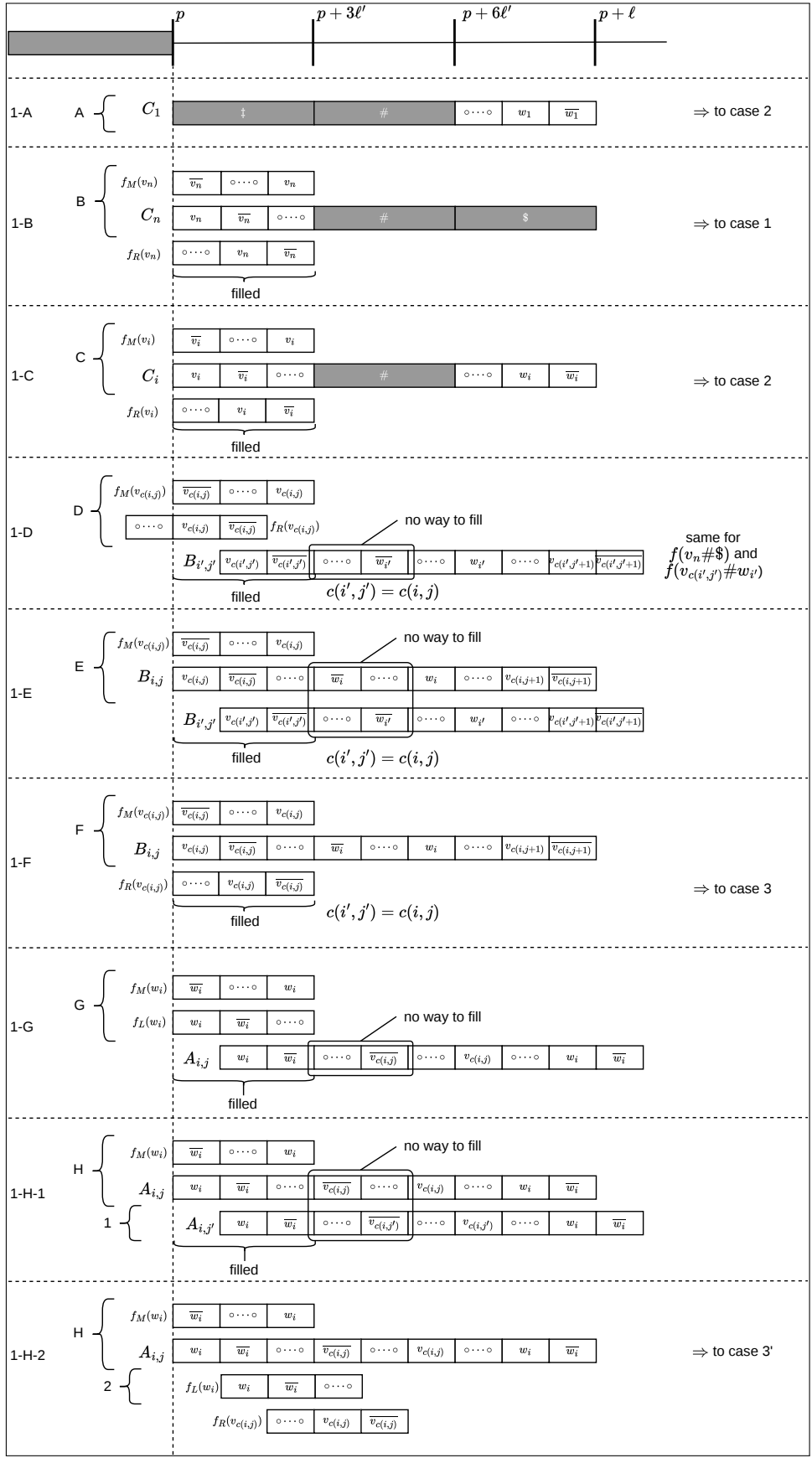}
    }
    \caption{Case 1 for proving Lemma~\ref{lemma:block_aligned}}\label{figure:block_aligned_rem0}
\end{figure}
}\JFigure{\begin{figure}[htbp]
    \centerline{\includegraphics*[page=2,clip=true,width=0.9\textwidth,frame]{img/cases.pdf}
    }
    \caption{Case 2 for proving Lemma~\ref{lemma:block_aligned}}\label{figure:block_aligned_rem1}
\end{figure}
}\JFigure{\begin{figure}[htbp]
    \centerline{\includegraphics*[page=3,clip=true,width=0.9\textwidth,frame]{img/cases.pdf}
    }
    \caption{Cases 3 and 3' for proving Lemma~\ref{lemma:block_aligned}}\label{figure:block_aligned_rem2}
\end{figure}
}\JFigure{\begin{figure}[htbp]
    \centerline{\includegraphics*[page=4,clip=true,width=0.9\textwidth,frame]{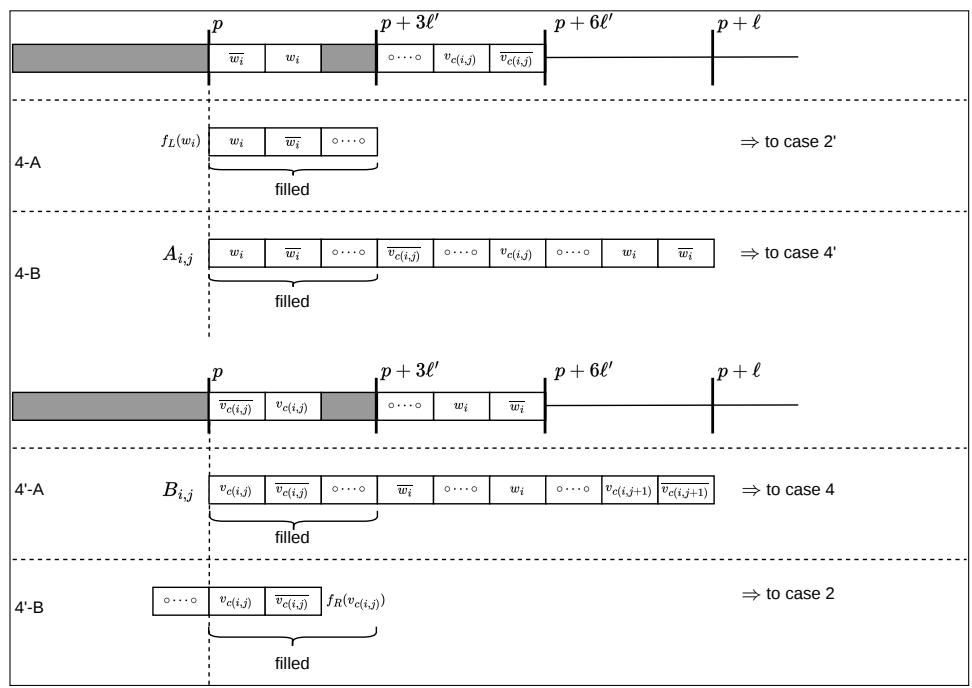}
    }
    \caption{Cases 4 and 4' for proving Lemma~\ref{lemma:block_aligned}}\label{figure:block_aligned_rem21}
\end{figure}
}

\clearpage

\clearpage

\appendix
\JO[\section{Reduction from 3-Colorability}]{}

\Jvar{}

\JO[\section{Supplementary Figures}]{}
\Jfigs{}

\clearpage
\section{MAX-SAT formulation}
\newcommand{\nbase}[1]{b_{#1}}
\newcommand{\ncheck}[1]{c_{#1}}
\newcommand{\nodemax}{n}
\newcommand{\nodeinternal}{V_I}
\newcommand{\posmax}{N}
\newcommand{\lblset}[1]{L_{#1}}
\newcommand{\lbl}[1]{l_{#1}}
\newcommand{\pad}[1]{p_{#1}}
\newcommand{\charmap}[1]{m_{#1}}
\newcommand{\charmat}[1]{g_{#1}}
\newcommand{\charmax}{C}

\newcommand{\distcost}[1]{d_{#1}}

The minimization problem can be formulated as a MAX-SAT instance.
MAX-SAT is a variant of SAT,
where we are given two sets of CNF clauses, hard and soft,
and the output is a truth assignment of the variables in which
all hard clauses are satisfied and the number of satisfied soft clauses is maximized~\cite{SATHandbook}.

Let $\posmax$ be an upper bound on the maximum size of base/check,
and let
the set of outgoing edge labels from node $i$ be denoted as
$\lblset{i}=\{a_1, \ldots, a_{|\lblset{i}|}\} \subseteq [1, \sigma]$

We define the following $O(\nodemax \posmax)$ Boolean variables:
\begin{itemize}
    \item $\nbase{i,j}$ for $i\in V$, $j \in [1, \posmax - \max(\lblset{i})]$: $\nbase{i,j}=1$ iff $\barr[\node{i}]=j$,
    \item $\ncheck{i,j}$ for
          $i \in [1,\nodemax]$, $j\in[1,\posmax]$: $\ncheck{i,j} = 1$ iff
          $\exists (i, \cdot, k) \in E$ s.t. $\carr[\node{k}=j]=\node{i}$,
    \item $\pad{j}$ for $j \in [1, \posmax]$: $\pad{j}=0$ iff $\carr[j^\prime]= \mathit{None}$ $\forall j^\prime \geq j$. \end{itemize}
We first give the hard clauses, which define the constraints these variables must satisfy.

The relationship between base and check values satisfy:
\begin{gather}
    \forall i\in V, \forall j\in[1,\posmax-\max(\lblset{i})], \forall a \in \lblset{i},
    \nbase{i,j}\rightarrow \ncheck{i,j+a}
\end{gather}

Each node $i$ can only have one base value:
\begin{gather}
    \forall i\in V, \sum_{j \in [1,\posmax-\max(\lblset{i})]} \nbase{i,j}= 1
\end{gather}

Each position $j$ only holds at most one check value:
\begin{gather}
    \forall j \in [1,\posmax], \sum_{i\in[1,\nodemax]} \ncheck{i, j}\leq 1
\end{gather}

Finally, $\pad{i}$ satisfies:

\begin{gather}
    \forall i \in [1,\nodemax], \forall j \in [1,\posmax], \ncheck{i,j}\rightarrow \pad{j} \\
    \forall j \in [2,\posmax], \pad{j} \rightarrow \pad{j-1}
\end{gather}

Next, for the soft clauses, we consider the following:

\begin{gather}
    \forall i \in [1, N], \lnot \pad{i}
\end{gather}

By the definition of $\pad{i}$,
$\sum_{j\in[1,\posmax]} \lnot \pad{j}$
is the size of the double-array, and thus maximizing the number of satisfied soft clauses minimizes the size of the double-array.

\subsection{Computational Experiments}

We implemented the MAX-SAT formulation using
the PySAT library~\footnote{\url{https://pysathq.github.io/}}.
Preliminary experiments on a small trie of 70 nodes and setting $N=256$,
the computation took more than 24 hours.
We also implement finding a semi-optimal value with the following strategies,
and measured the computational time for different sizes of tries.
(1) Discard the soft constraints, and solve the SAT problem defined by the hard constraints, to see if there exists a double-array of size $N$,
(2) set a time-out
(3) compute the smallest double-array that can be found within the time-out via binary search .

The method is tested on the words dataset and the first $x = 30, 100, 200, 300$ words are represented in the trie. Table~\ref{tab:experiment} summarizes the results.
Here, \#greedy and \#sat respectively denote the found double-array size with the greedy algorithm and semi-optimal algorithm.
Density is the size of the trie with respect to the double-array size, i.e., the number
of valid elements in the double-array.
"-" denotes that the computation did not finish in 1 hour.
We can see that SAT always finds a smaller double-array than greedy, especially for small trie sizes.

\begin{table}[ht]
    \tiny
\centering
    \caption{Comparison between greedy and semi-optimal}
    \begin{tabular}{|l|r|r|r|r|r|r|}
        \hline
        ~          & \# trie nodes & \#greedy & greedy density & \#SAT & SAT density & time-out (min) \\ \hline
        word\_30   & 293           & 614      & 0.48           & 389   & 0.75        & 30             \\ \hline
        word\_100  & 919           & 1126     & 0.81           & 1015  & 0.90        & 40             \\ \hline
        words\_200 & 1792          & 1907     & 0.94           & 1907  & 0.94        & 40             \\ \hline
        words\_300 & 2598          & 2918     & 0.89           & -     & -           & -              \\ \hline
    \end{tabular}
    \label{tab:experiment}
\end{table}

\end{document}